\theoremstyle{plain}
\newtheorem{theorem}{Theorem}
\newtheorem{claim}{Claim}
\newtheorem{proposition}{Proposition}
\newcommand{\E}{\mathbb{E}}
\newcommand{\cY}{\mathcal{Y}}
\newcommand{\al}{\alpha}
\newcommand{\I}{\mathbb{I}}
\newcommand{\tmu}{\tilde{\mu}}
\newcommand{\tR}{\tilde{R}}
\newcommand{\hy}{\hat{y}}
\newcommand{\bL}{\hat{L}}
\DeclareMathOperator*{\argmax}{arg\,max}
\DeclareMathOperator*{\argmin}{arg\,min}
\begin{document}


\title{Towards A Marketplace for Mobile Content: Dynamic Pricing and Proactive Caching}

\author{
    \IEEEauthorblockN{F. Alotaibi\IEEEauthorrefmark{1}, S. Hosny\IEEEauthorrefmark{1}, J. Tadrous\IEEEauthorrefmark{2}, H. El Gamal\IEEEauthorrefmark{1} and A. Eryilmaz\IEEEauthorrefmark{1}}\\
    \IEEEauthorblockA{\IEEEauthorrefmark{1}The Ohio State University, USA.\\ Email: \{alotaibi.12, hosny.1, elgamal.2, eryilmaz.2\}@osu.edu}
    \\\IEEEauthorblockA{\IEEEauthorrefmark{2} Rice University, USA. Email: jtadrous@rice.edu}
}

%
%
%
\maketitle
%


\begin{abstract}
In this work, we investigate the profit maximization problem for a wireless network carrier and the payment minimization for end-users. Motivated by recent findings on proactive resource allocation, we focus on the scenario whereby end-users who are equipped with device-to-device (D2D) communication can harness predictable demand in \emph{proactive data contents caching} and the possibility of trading their proactive downloads to minimize their expected payments. The carrier, on the other hand, utilizes a \emph{dynamic pricing} scheme to differentiate between off-peak and peak time prices and applies commissions on each trading process to further maximize its profit. A novel \emph{marketplace} that is based on risk sharing between end-users is proposed where the tension between carrier and end-users is formulated as a Stackelberg game. The existence and uniqueness of the non-cooperative sub-game Nash equilibrium is shown. Furthermore, we explore the equilibrium points for the case when the D2D is available and when it is not available, and study the impact of the uncertainty of users’ future demands on the system’s performance. In particular, we compare the new equilibria with the baseline scenario of flat pricing. Despite end-users connectivity with each other, the uncertainty of their future demands, and the freshness of the pre-cached contents, we characterize a new equilibria region which yields to a win-win situation with respect to the baseline equilibrium. We show that end-users’ activity patterns can be harnessed to maximize the carrier’s profit while minimizing the end-users expected payments.
\end{abstract}

\begin{IEEEkeywords}
Game Theory, cellular network offloading, peer-to-peer trading, dynamic pricing, predictable demand.
\end{IEEEkeywords}


%
\section{Introduction}
The tremendous increase in demand for spectrum-based services and devices has led network carriers these days to experience a major demand and supply mismatch during the whole day. Throughout the daytime, demand levels raise substantially reaching the network capacity and causing excessive service costs and undesirable congestions. On the other hand, at the nighttime, the demand significantly drops to minimum levels rendering the network resources severely underutilized \cite{1_oss.oetiker.ch_2015}. According to FCC report in 2002, these increasing demands are straining the longstanding and outmoded spectrum policies \cite{Federal2002Spectrum}. Such a demand disparity is ultimately tied to the end-user behavioral pattern which consistently shows that end-users are very active during the peak hour, and idle at the off-peak time \cite{Song2010Limits,eagle2006reality,farrahi2008discovering}. Interestingly, the time varying end-user activities, that are ultimately behind this mismatch, can be exploited to solve this demand disparity.

As the peak-to-average demand problem has become more severe, the number of researches on efficient control of the supplied services to best match the demand patterns has grown remarkably. \emph{Dynamic pricing} approach has gained a considerable amount of attention as a natural means of influencing the users' economic responsiveness to reduce their peak demand \cite{dalkilic2014pricing}. In fact, some network operators outside the USA (such as Orange, MTN and Uninor) have already started using adaptive pricing schemes to mitigate the excessive cost resulting from high bandwidth consumption \cite{Economist2009Mother}-\cite{Higginbotham2010Mobile}. Moreover, the author in \cite{Ha2012Pricing} shows how dynamic, time-dependent usage pricing incentivizes end-users to spread out their bandwidth consumption more evenly across different times of the day helping ISPs to overcome the problem of peak congestion. There has been other attempts to formulate and study such a dynamic assignment of service prices in \cite{Ha2012Pricing}, whereby the economic responsiveness of the end-users has been utilized through discount incentives to defer the peak network load towards the off-peak time. These schemes, however, rely on the assumption that end-users are willing to change their activity profile based on the network prices, which seems a rather strong assumption since end-users activities are tied to non-flexible constraints such as work, school, and sleep times. Thus, there is an inherent need for a different paradigm of smart pricing that does not only facilitate the best utilization of the network resources, but reasonably accounts for the user's ability and willingness to comply with certain incentive strategies.

WiFi data offloading, on the other hand, has a significant promise towards reducing the peak load of cellular networks. There has been a growing interest in analyzing and optimizing different offloading schemes that \emph{reactively} reschedule the cellular traffic over the WiFi infrastructure while incentivizing end-users through less payments and higher data rates \cite{dimatteo2011Cellular,xiaofeng2013Offloading,mimran2013android}. Attempts to analyze the economics of such WiFi offloading approach are discussed in \cite{lee2013Economics,Lin2013Economics}. In \cite{lee2013Economics}, the scenario where end-users with expected WiFi connectivity utilize \emph{delayed WiFi download} strategies to minimize their payments in a Stackelberg game has been considered, with cellular carriers play first by setting service prices to maximize their profit. It has been shown that win-win situations can be achieved under some conditions on the WiFi availability, end-user willingness to pay, and demand characteristics. In \cite{Lin2013Economics}, the competition between WiFi Access Point (AP) operators and cellular carriers has been considered, where equilibrium points of  WiFi-offloading amounts and corresponding pricing have been characterized. Those model, however, tends to depend mainly on WiFi availability which affects end-user payment and carrier profit and they also did not account for WiFi accessibility cost.

In a recent work \cite{Tadrous2013Pricing}, a different approach for price usage in data networks was proposed. In particular, pricing has been concurrently employed with proactive data services to minimize the time average cost incurred by the carriers. The primal use of pricing incentives has been to enhance the certainty about which service the end-user will ask for. Hence, pricing has not conflicted with the end-user activity. The high certainty about the future demand allowed for an efficient employment of \emph{proactive} data services, a scheme through which predictable peak-hour demand is served \emph{ahead of time} (mainly during the off-peak hour) to smooth-out the network loads and minimize the expected carrier costs. In that work, however, the impact of pricing allocation on the expected profit of the carrier has not been studied.

Device-to-device (D2D) communication has been proposed in \cite{yu2011resource} as a promising technology that can relief the wireless networks congestion. A pair of end-user moving within a close proximity to each other, can establish a D2D link that can be operated in the unlicensed spectrum band, such as the industrial, scientific and medical (ISM) radio bands. These D2D links when used as a traffic offloading approach introduces very little or no monetary cost for end-users. Through exploring the social ties and influence among individuals, \cite{zheng2014social} studied D2D traffic offloading in social networks where a novel approach to improve the performance of D2D communications was proposed as a secondary link over cellular system. The authors in \cite{al2011game} studied the scenario of traffic offloading for common contents to a group of end-users who are equipped with D2D communication. The optimal solution was achieved when the carrier sends different portions of content to some end-users who share them to other end-users.

The main contribution of this paper is to develop a new framework that enables a wireless carrier to increase its profit while end-users reduce their payments. This goal is achieved by exploiting D2D communications when the carrier's smart pricing coupled with end-users proactive caching is applied so as to create a marketplace where all players act individually and strategically. In this work we introduce a carrier assisted risk-sharing marketplace model that allows multiple non-cooperative end-users to trade multiple distinct contents while minimizing each end-user's payment. In this proposed marketplace, the carrier applies dynamic commission on each trading process and harnesses smart pricing schemes to control the amount and timing of such downloads in a way that reduces its costs, and yields higher profit. End-users, on the other hand, leverage \emph{proactive} data downloads over cheap network interfaces and a possibility of trading them to minimize their own payments. We formulate the tension between the two parties as a Stackelberg game where the carrier sets prices first and $N$ end-users respond with proactive downloads and their selling prices. We take into account the uncertainties of connectivity between users, end-users demand, and the freshness of proactive downloads.

We characterize the resulting equilibria points and compare them with the baseline equilibrium defined under flat pricing and no proactive downloads. We show that, caching some data contents in off-peak time enables end-users to save more money and hence increase their chance of reducing their payment below the baseline. This proactive caching shifts some of the peak time load to the off-peak time and hence reduces the service cost for the carrier. This allows the carrier to increase its profit above the baseline. Furthermore, We show that a carrier allowing proactive downloads with trading can make more profit than the case of allowing proactive downloads only without trading. Users, on the same vein, save more by trading their proactive downloads during the peak time and reduce their expected payment.

The rest of this paper is organized as follows. In Section \ref{sec:sys_mod}, we layout the system setup and define the characteristics of its main components. We formulate the Stackelberg game and study the structure of its equilibrium in Section \ref{sec:prob_form}. In Sections \ref{sec:disconnected_users} we characterize the win-win situations for disconnected end-users while in Section \ref{sec:two_connected_users} we characterize the win-win situation for the connected end-users. Numerical results are provided in Section \ref{sec:res}, and the paper is concluded in Section \ref{sec:conc}.


\section{System Model} \label{sec:sys_mod}

In this paper, we consider a wireless communication system consisting of a single \textit{wireless network carrier} who supplies $M$ data contents to $N$ end-users upon demand as shown in Fig. \ref{fig:sys_model}. Depending on end-users consistent activities pattern, the carrier divides the day into \textit{off-peak} and \textit{peak} times. The load requested by end-user $j$ at time $t$ is denoted by $L_t^{(j)}$, where $t\in \{o,p\}$ with $o$ and $p$ representing the off-peak and peak times, respectively. These loads are non-negative random variables and assumed to satisfy $\hat{L_p}^{(j)}>\hat{L_o}^{(j)}$ where $\hat{L_t}^{(j)}=\mathbb{E}[L_t^{(j)}]$. In order to reduce its peak service cost, the carrier holds a marketplace that allows end-users to trade pre-cached data contents between each other where the carrier earns a commission from each trade. In the following, we introduce \textit{contents proactive download}, \textit{contents trading} and \textit{pricing} models for all parties.

\begin{figure}[h!]
  \centering
  \includegraphics[width=0.35\textwidth]{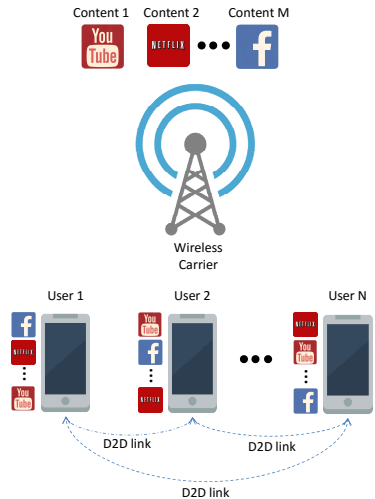}\\
  \caption{An illustration of the system model}\label{fig:sys_model}
\end{figure}

\subsection{Smart Pricing Policy}
By harnessing end-users connectivity to each other and their expected consumption patterns, the carrier produces a dynamic pricing policy $(y_p, y_o, \gamma)$ which denotes the peak time price, off-peak time price, and the trading commission, respectively. This smart pricing strategy is aimed to spread-out the network peak time load and maximizes the carrier's expected profit. Due to inherent market competition and the possibility of regulator rules, the smart pricing policy is assumed to lie within a constraint set $\mathcal{Y}$. In our analysis, we focus on maximum price constraints. In particular, $\cY:=\{0 \leq y_t\leq \hy, t\in\{o,p\} \}$, where $\hat{y}$ is an upper limit price imposed by the regulator. Furthermore, the commission is defined such that $\gamma\in{[0,1]}$. knowingly, the potential proactive download and the trading actions affects the carrier pricing policies, and hence the carrier optimizes its smart pricing policy so as to influence the marketplace towards its profit maximization.

\subsection{Proactive Content Download}
Inspired by the growing body of evidence that human behavior is highly predictable \cite{Song2010Limits,eagle2006reality,farrahi2008discovering,malandrino2012proactive}, we assume that end-users can learn and predict their own and other end-users future consumption and consequently apply proactive content download decisions during the off-peak time taking on consideration all players interest to minimize their individual daily expected payments by utilizing cheap network prices. end-user $j$'s peak time load is defined to be the sum of requests to $M$ uncorrelated data contents (e.g., YouTube video, Netflix movie, CNN news thread, etc), that is, $L_p^{(j)} :=\sum_{m=1}^M{S_m \I_{m}^{(j)}}$, where $S_m$ is the size of data content $m$, and $\I_{m}^{(j)}$ is the indicator that end-user $j$ requests data content $m$ in the peak time. We use $p_m^{(j)}$ to denote $P(\I_{m}^{(j)}=1)$ and we call it here user $j$'s \emph{interest} in content $m$ during the peak time and assume it can be predicted by all parties. We also assume that all end-users take the decision to download data contents at the beginning of each time slot $t \in \{o,p\}$.
Driven by the different pricing strategies employed by the carrier and the selling prices of the other users, the proactive download decision made by end-user $j$ in the off-peak time for data content $m$ is denoted by $x_m^{(j)}$, where $x_m^{(j)}\in\left[0,S_m\right]$. Essentially, each end-user aims to maximally benefit from the potentially lower off-peak prices by pre-caching data content for his future demand.

We also assume that the carrier cooperates with end-users and inform them with the availability of data contents at other end-users. Consequently, end-users can save some money when they purchase these data contents with lower prices from any end-user offering them for sale in the peak time. Moreover, end-users who bought these data contents at the off-peak time can compensate some of their payments by trading their proactive downloads. On the other hand, carrier can achieve some gain through the commissions it earns from the content trading actions of the end-users. Privacy concerns always arise in any social based system model. Although it is non of this paper's purposes to suggest solutions to solve these issues, one way to tackle these issue is as suggested in \cite{zhang2015social} by designing selectivity control mechanisms that allow users to choose what contents they are willing to share and what contents they are not.

\subsection{Content Trading}
In order to allow content trading, we assume that end-users are equipped with WiFi direct devices which allow the creation of peer-to-peer connections between WiFi client devices without requiring the presence of a traditional WiFi infrastructure network (i.e., AP or router). In this article we focus on the case of one-hop D2D communications only, and show how it can be applied to allow end-users to trade data between them and get some saving in their payments. We use the indicator $\I_w^{(j,i)}$ to indicate that end-user $j$ and end-user $i$ are connected together and user $j$ is offering the lowest selling price to user $i$. Thus, $\omega_p^{(j,i)} = P(\I_w^{(j,i)} = 1)$ is the probability that end-user $j$ can transfer data to end-user $i$ during the peak time. We will be calling $\omega_p^{(j,i)}$ the connectivity of end-user $j$ to end-user $i$.

We assume that source coding is applied so that the size, beginning and end of each data content are known to all parties. Moreover, data contents get updated in the peak time and the carrier always has the most fresh version of each content. The carrier keeps records for what end-users download and when they download it. This allows the carrier to know which part of the downloaded content at each end-user is still fresh. The proactive download carried out by the end-user in the off-peak time is not guaranteed to be entirely fresh at the time of consumption. That is, some content (especially news) may receive fast updates that essentially render part of the proactive download irrelevant for peak time consumption. We capture the freshness/correlation of data content $m$ by a factor $\alpha_m\in{[0,1]}$ which is the fraction of $x_m^{(j)}$ that is fresh for consumption in the peak time.

To compensate his payment, end-users $j$ can sell the fresh portions of his proactive download to other end-users through the peer-to-peer connection during the peak-time by setting a per-unit data price $y_s^{(j)}$. In order to avoid contents overlapping, we have $\I_{x_m}^{(i,j)}$ as an indicator that end-user $i$ is going to buy a portion of the proactive download of end-user $j$ and this happens when $x_m^{(i)}\leq x_m^{(j)}$. We use $\chi_m^{(i,j)}$ to denote $P(\I_{x_m}^{(i,j)}=1)$.


\section{Problem Formulation} \label{sec:prob_form}
In this section, we first formulate the problem as a mutli-stage Stackelberg game. Then, the Stackelberg equilibrium of the proposed game is investigated.

\subsection{Stackelberg Game Formulation}
We formulate the problem as a Stackelberg game with a complete and perfect information. In this game, the carrier is the leader who starts by announcing his optimum smart pricing strategy ($y_p^*, y_o^*, \gamma^*$) taking into consideration the expected end-users reaction. After receiving the carrier pricing strategy, each end-user announces his best selling price $y_s^{*(j)}$. Finally, and based on the received selling prices, all end-users react and choose their optimum proactive decision $x_m^{*(j)}$ for each data content $m$. In the following two subsections we formulate the cost function for the end-users and the  carrier, respectively.
\subsubsection{End-user Optimization}
Driven by the announced carrier prices and the a-priori known end-users interests, end-user $j$ takes his proactive decisions in the off-peak time having in mind the possibility of selling them to the other $N-1$ end-users in the peak time. Moreover, for the contents that end-user $j$ decides to discard caching in the off-peak time, he tries to buy the fresh portion of it in the peak time from any other end-user who cached it. Now, since the carrier has the most fresh version of these data contents, during the peak time, end-user $j$ refreshes his proactive downloads and the contents he purchased from the other end-users. Hence, the expected payment of end-user $j$ is given by:
\begin{equation} \label{Eq_UserNPayment}
\begin{split}
\mu^{(j)} &= y_o \biggl( \hat{L_o}^{(j)} + \sum_{m=1}^M x_m^{(j)} \biggr) + \sum_{m=1}^M \sum_{\substack{i=1 \\ i\neq j}}^N y_s^{(i)} \omega_p^{(i,j)} ( x_m^{(i)} - x_m^{(j)}) \chi_m^{(j,i)} p_m^{(j)}\alpha_m\\
& + y_p \sum_{m=1}^M \sum_{\substack{i=1 \\ i\neq j}}^N \omega_p^{(i,j)} \biggl[ S_m - \alpha_m \biggl(x_m^{(j)}+(x_m^{(i)}- x_m^{(j)}) \chi_m^{(j,i)}\biggr) \biggr] p_m^{(j)}\\
& + y_p \prod_{\substack{i=1 \\ i\neq j}}^N \biggl(1- \omega_p^{(i,j)} \biggr) \sum_{m=1}^M (S_m - \alpha_m x_m^{(j)}) p_m^{(j)} - y_s^{(j)} \sum_{m=1}^M \sum_{\substack{i=1 \\ i\neq j}}^N \omega_p^{(j,i)}(1-\gamma) \alpha_m ( x_m^{(j)} - x_m^{(i)}) \chi_m^{(i,j)} p_m^{(i)}
\end{split}
\end{equation}

In (\ref{Eq_UserNPayment}), we note that end-user $j$ is minimizing his payment by caching proactively an amount $x_m^{(j)}$ from each data content $m$ during the off-peak time using the cellular network resources. He also purchases the fresh non-overlapping portion of other end-users proactive downloads when they are connected to him. Moreover, he tries to sell his proactive downloads in the peak time to any other end-user who is looking for it. So, he tries to find the best selling price to offer his proactive download for sale. Thus, end-user $j$'s best-response is given by:
\begin{equation}\label{Eq_OptJStrategy}
(x_m^{*(j)}, y_s^{*(j)}) := \underset{\substack{x_m^{(j)} \in{[0,S_m]} \\ 0 \leq y_s^{(j)} \leq y_p}} \argmin \mu^{(j)}
\end{equation}

\subsubsection{Carrier Optimization}

In the Stackelberg formulation, the carrier plays first by finding a new smart pricing that maximizes his expected profit $R$. This profit $R$ is shaped by the difference between its incurred cost to provide the service, denoted by $\eta$, and the received payments from all users, i.e.,
\begin{equation} \label{Eq_NSPProfit}
R := \sum_{j=1}^N \mu^{*(j)} - \eta
\end{equation}
where $\mu^{*(j)}$ is the expected optimized payment for end-user $j$. It is important to mention that the total payments received for all end-users (i.e. $\sum_{j=1}^N \mu^{*(j)}$) includes the commission from every trade. We consider the carrier's cost to be proportional to the maximum load it incurs from all end-users throughout the day which is the sum of the off-peak and peak loads. The off-peak load represents the sum of the end-users proactive downloads and the average off-peak demand. The peak load represents the refreshing and completion of all data contents for all end-users. Thus, the carrier's cost is given by:
\begin{equation} \label{Eq_NSPCost}
\begin{split}
\eta &= \beta \mathbb{E} \left[ \max \left\{ \sum_{j=1}^N \biggr [\hat{L_o}^{(j)} + \sum_{m=1}^M x_m^{*(j)} \biggr], \right. \right. \sum_{j=1}^N \sum_{\substack{i=1 \\ i\neq j}}^N \sum_{m=1}^M \I_w^{(j,i)} \biggl[ S_m - \alpha_m \biggl( x_m^{(j)} + (x_m^{(i)}- x_m^{(j)})\I_{x_m}^{(j,i)} \biggr) \biggr] \I_{m}^{(j)}\\
& \left. \left. + \sum_{j=1}^N \prod_{\substack{i=1 \\ i\neq j}}^N \biggl( 1- \I_w^{(j,i)} \biggr) \sum_{m=1}^M (S_m - \alpha_m x_m^{(j)} )  \I_{m}^{(j)} \right\} \right]
\end{split}
\end{equation}
where $\beta$ is the cost factor of the carrier. The carrier aims to find the prices $y_o^*,y_p^*$ and the commission $\gamma^*$ that maximizes his profit. Hence, taking into consideration all users, the optimal pricing policy is:
\begin{equation} \label{Eq_SPOpt}
(y_o^*,y_p^*,\gamma^*) := \underset{\substack{(y_o,y_p) \in \cY \\ \gamma \in [0,1] }} \argmax R
\end{equation}

We denote the equilibrium profit under optimized smart pricing by $R^*$. The Stackelberg equilibrium for the proposed game is investigated in the following subsection.

\subsection{Stackelberg equilibrium}
Generally, the Stackelberg Equilibrium (SE) for a Stackelberg game can be obtained by finding its subgame perfect Nash Equilibrium (NE). As we can see from the game formulation, all end-users are competing with each other and creating a non-cooperative subgame.
For this non-cooperative subgame, NE is defined as the the strategy profile at which no player can minimize his payment by changing his strategy unilaterally.

Now, since there is only one carrier, the best response of the carrier can be obtained by solving (\ref{Eq_SPOpt}). However, because the carrier derive its pricing strategy based on the expected end-users payments, backward induction will be followed by finding the optimum solution for the end-users first, then we find the best pricing strategy for the carrier.
We now show the existence and uniqueness of the subgame perfect Nash Equilibrium as defined in \cite{han2012game}.

\begin{proposition}\label{proposition_2}
In the studied single leader multi-followers game, given that the carrier has announced his best response, there exists a unique NE for the non-cooperative subgame between the N end-users (followers).
\end{proposition}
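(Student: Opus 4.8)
The plan is to exploit the sequential structure of the followers' subgame and solve it by backward induction: for the pricing policy $(y_o,y_p,\gamma)$ fixed by the carrier, first characterize the equilibrium of the last stage (the proactive-download game, with the selling prices $\{y_s^{(i)}\}_{i=1}^N$ held fixed), then substitute that characterization back to obtain a reduced game in the selling prices alone, and finally analyze that reduced game. Existence at each stage will follow from the Debreu--Glicksberg--Fan theorem once the relevant continuity and quasi-convexity are verified, since every action set, $[0,S_m]$ for each $x_m^{(j)}$ and $[0,y_p]$ for each $y_s^{(j)}$, is non-empty, compact and convex.

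For the download stage I would collect, for each content $m$, all the terms of $\mu^{(j)}$ in \eqref{Eq_UserNPayment} that contain $x_m^{(j)}$. The resulting expression is affine in $x_m^{(j)}$ (and, where the non-overlap indicators $\chi_m^{(\cdot,\cdot)}$ change regime, piecewise-affine and convex), and it is separable across $m$; hence the best response for content $m$ is a threshold rule: $x_m^{*(j)}=S_m$ when the aggregate coefficient is negative and $x_m^{*(j)}=0$ otherwise, with any value optimal in the degenerate case of a vanishing coefficient. That coefficient is the off-peak unit price $y_o$ minus $\alpha_m p_m^{(j)}$ times the expected peak-time savings from refreshing less and from reselling, weighted by the connectivities $\omega_p^{(i,j)}$, the trading probabilities $\chi_m^{(\cdot,\cdot)}$, the factor $1-\gamma$, and the relevant prices; crucially it does not depend on the rivals' download levels $\{x_m^{(i)}\}$. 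Therefore, for a fixed selling-price profile the download game decouples across users and admits a unique equilibrium in closed form (fixing a tie-breaking convention such as $x_m^{*(j)}=0$, which is innocuous because ties occur only on a measure-zero set of parameters).

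It remains to treat the selling-price stage. Substituting the unique download equilibrium into \eqref{Eq_UserNPayment} gives, for each $j$, a reduced payment $\mu^{(j)}(y_s^{(1)},\dots,y_s^{(N)})$; one must check that this is continuous in the full price profile despite the non-smoothness introduced by the ``offers the lowest price'' events $\I_w^{(j,i)}$, through which $\omega_p^{(j,i)}$ depends on $\mathbf{y}_s$ --- continuity is retained because $\omega_p^{(j,i)}$ is a probability, so the underlying connectivity and tie-breaking randomness smooths the indicator. I would then show $\mu^{(j)}$ is quasi-convex in $y_s^{(j)}$ on $[0,y_p]$: raising one's own price increases the per-unit margin but decreases the probability of being the selected seller, so the trade-off is single-dipped and the best response is single-valued, lying at $0$, at $y_p$, or at the unique interior stationary point. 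Existence of a Nash equilibrium of the price game then follows from Debreu--Glicksberg--Fan, and uniqueness from verifying that the price best-response map is a contraction (equivalently, that the pseudo-gradient of $(\mu^{(1)},\dots,\mu^{(N)})$ satisfies Rosen's diagonal strict convexity), so that it has a single fixed point. Combining the unique download equilibrium associated with every price profile with the unique price equilibrium yields a unique NE for the whole followers' subgame.

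\textbf{Main obstacle.} The delicate step is the selling-price stage, because the payments are Bertrand-like in $\{y_s^{(i)}\}$ through the ``lowest price wins'' mechanism, so one must rule out both non-existence (the classical undercutting pathology) and a continuum of equilibria. The argument hinges on the probabilistic smoothing of $\I_w^{(j,i)}$ via $\omega_p^{(j,i)}$ together with a genericity assumption on the parameters $\{\omega_p^{(\cdot,\cdot)},\chi_m^{(\cdot,\cdot)},\alpha_m,p_m^{(j)}\}$ that keeps the relevant coefficients non-degenerate; once these are in place, the quasi-convexity bound and the contraction / strict-diagonal-convexity estimate are routine, if tedious. The download stage, by contrast, is straightforward given the affine–separable structure and the decoupling noted above.
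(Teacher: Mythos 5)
Your overall skeleton (compact, convex strategy sets, continuity of best responses for existence, a Rosen-type diagonal strict concavity argument for uniqueness) is in the same spirit as the paper, which disposes of the proposition in two lines: existence via the Schauder fixed-point theorem applied to the followers' joint best-response map, and uniqueness via Rosen's theorem after asserting that, for a fixed leader strategy, the sum of the end-users' utilities is diagonally strictly concave. However, your execution has a genuine gap precisely at the step you yourself flag as the main obstacle, and it is not repaired by the argument you give. You claim the reduced payments are continuous in the selling-price profile because ``$\omega_p^{(j,i)}$ is a probability, so the underlying connectivity and tie-breaking randomness smooths the indicator.'' In the model, $\I_w^{(j,i)}$ is the event that $j$ and $i$ are connected \emph{and} $j$ offers the lowest selling price to $i$; the price comparison is a deterministic function of the strategy profile, so the connectivity randomness only scales the jump in expected payoff when two sellers' prices cross --- it does not remove the discontinuity. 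Consequently the Debreu--Glicksberg--Fan existence argument for the price stage is not secured, and the uniqueness claim (``contraction / diagonal strict convexity \ldots routine, if tedious'') is exactly the content of the proposition and is never established. A second modeling slip compounds this: you treat $\chi_m^{(i,j)}$ as an exogenous parameter subject to a genericity assumption, but $\chi_m^{(i,j)}=P(\I_{x_m}^{(i,j)}=1)$ is determined by the comparison $x_m^{(i)}\leq x_m^{(j)}$ of the chosen downloads, i.e.\ it is endogenous to the subgame you are analyzing.

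For contrast, the paper avoids the Bertrand-undercutting issue not by smoothing but by structure: Claim \ref{claim_1} together with the assumption that interests are strictly ordered implies that for each content only the single most-interested end-user acts as seller, and his optimal selling price (\ref{Eq_ysj_opt}) is pinned down by making the buyers indifferent between waiting and proactively caching; there is no head-to-head price competition left to generate non-existence or a continuum of equilibria. If you want to salvage your backward-induction route, you need either that structural reduction to a single seller per content, or an explicit verification of continuity/quasi-convexity of the reduced price game and of the diagonal strict concavity condition --- neither of which follows from the probabilistic-smoothing claim as stated. Your analysis of the download stage (affine, separable in $x_m^{(j)}$ with a threshold best response independent of rivals' download levels) is fine and matches the paper's equations (\ref{eq:x^*}) and (\ref{Eq_NOptJProNE}).
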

\begin{proof}


The existence follows from Schauder Fixed-Point Theorem \cite{istrc1981fixed} since the best-response functions of the end-users are continuous and end-users strategy sets are convex and compact. For the uniqueness, we can see that for a fixed leader strategy the sum of end-users utilities is diagonally strictly concave. Thus, by Rosen theorem, we have a unique Stackelberg Equilibrium \cite{rosen1965existence}.
\end{proof}

Now, we study the effect of end-users proactive behaviour and investigate its equilibrium point and then compare it with the flat pricing scenario. Then we study the effect of trading between end-users and investigate its equilibrium point and then compare it with the flat pricing scenario.


\section{Disconnected end-users Stackelberg Game}\label{sec:disconnected_users}

In this section, we study the case where end-users are disconnected and playing separately with the carrier (i.e. $\omega_p^{(i,j)}=0$ $\forall i,j$). First, we pose the price and proactive download allocation problem, then we introduce our baseline decisions and define the performance metrics.

\subsection{The Stackelberg Game}

\subsubsection{\bf End-user Optimization}

For the case when $\omega_p^{(i,j)}=0$ $\forall i,j$ , (\ref{Eq_UserNPayment}) reduces to cover the proactive download carried out by the end-user in the off-peak time and the refreshing process that happens in the peak time. Hence the expected end-user payment is given by:
\begin{equation} \label{Eq_User1Payment}
\begin{split}
 \mu^{(j)} = y_o \biggl( \hat{L_o}^{(j)} + \sum_{m=1}^M x_m^{(j)} \biggr)
 + y_p \sum_{m=1}^M(S_m - \alpha_m x_m^{(j)}) p_m^{(j)}
\end{split}
\end{equation}

Besides the off-peak demand, each end-user downloads proactively an amount of $x_m^{(j)}$ from data content $m$ during the off-peak time. Then he refreshes his proactive downloads and completes the remaining part of each data content during the peak-time. The user's optimization on such decisions is given by: $\min_{\{x_m\in[0,S_m]\}_m} \mu^{(j)}$, which ultimately yields the following structure to optimal proactive download response over the cellular resources, $x_m^{*(j)}$:
\begin{equation}
\label{eq:x^*}
x_m^{*(j)}=\begin{cases} S_m, & \pi_m^{(j)}\geq\frac{y_o}{y_p},\\
0, & \text{otherwise}
\end{cases}
\end{equation}
where $\pi_m^{(j)}:=\alpha_m p_m^{(j)}$ captures the \emph{freshness-popularity product} of source $m$ for end-user $j$. We can see that sources with high values of $\pi_m^{(j)}$ are more likely to be proactively served. Thus, without loss of generality, let us assume in the sequel that data sources are ordered such that $\pi_1^{(j)}> \cdots > \pi_M^{(j)}$, and we use $\pi_0^{(j)}:=1$. Then Let us combine the popularity-freshness products of all end-users for all data contents together in one collection and sort them descendingly to get a new set  $\Pi=\{\pi_k,k\in\{0,1,2,\cdots,MN\}\Big\vert\pi_0>\pi_1>\pi_2>\cdots>\pi_{MN}\}$. We also denote by $\mu^{*(j)}$ the optimized end-user payment through off-peak proactive downloads.

\subsubsection{\bf Carrier Optimization}
As there is no trading between users, because they are not connected together, the carrier will not earn any commission as a revenue resource and then (\ref{Eq_NSPCost}) reduces as follows:
\begin{equation} \label{Eq_User1SPCost}
\begin{split}
\eta = \beta \mathbb{E} \left[ \max \left\{ \sum_{j=1}^N \biggr( \hat{L_o}^{(j)} + \sum_{m=1}^M x_m^{*(j)} \biggr), \right. \right.
 \left. \left. + \sum_{j=1}^N (S_m - \alpha_m x_m^{*(j)} ) \I_{m}^{(j)} \right\} \right]
\end{split}
\end{equation}

Hence, the optimal pricing policy is $(y_o^*,y_p^*):=\arg\max_{(y_o,y_p)\in\cY} R$, which implicitly involves the assignment of a threshold $M^{*(j)}$ for which end-user $j$ will proactively fetch content from sources with indexes $m\leq M^{*(j)}$ over the cellular connection. We denote the equilibrium profit under optimized smart pricing by $R^*$.

\subsection{Performance Metrics}
To evaluate the potential gains of smart pricing and its effect on influencing the demand to minimize the cost, we compare the resulting equilibrium point $(\mu^{*(j)}, R^*)$ with a \emph{baseline} point $(\tilde{\mu}^{(j)}, \tilde{R})$. The baseline is defined under the scenario of flat pricing $\hy$ over the whole day, and no proactive downloads. This is motivated by the flat pricing strategies employed currently by several carriers such as AT\&T's \$10/GB, or \$5/50MB.  Further, we adopt the {\bf savings gain} $\Delta \mu^{(j)}:=\tilde{\mu}^{(j)} - \mu^{*(j)}$, and {\bf profit gain} $\Delta R:=R^*-\tilde{R}$ as our performance metrics for this system. We investigate the potential of win-win situations, i.e., $\Delta \mu^{(j)}>0 \hspace{1mm} \forall j \in \{1,2,\cdots,N\}$ and $\Delta R>0$ under the maximum price constraints.

Under maximum price constrains, the carrier's peak price is $y_p = \hat{y}$ while its off-peak price is always $y_o \leq \hat{y}$. Based on the carrier choice of $y_o$, each end-user will react according to (\ref{eq:x^*}) and cache $M^{*(j)}$ data contents proactively. The carrier can simply set $y_o = \hat{y} \pi_k$ for $k\in\{0,1,2,\cdots,MN\}$, i.e. $y_o$ takes values from $y_o=\hat{y} \pi_0=\hat{y}$ to $y_o=\hat{y} \pi_{MN}$. Each value of $k$ has a set of corresponding $k^{(j)}$'s where every $k^{(j)}$ is related to end-user $j$ and $k=\sum_{j=1}^{N} k^{(j)}$. This means that end-user $j$ will cache contents from sources with indexes $m \leq k^{(j)}$. Consequently, the optimum off-peak price is $y_o=\hat{y}\pi_{M^*}$ where $M^*$ has as set of corresponding $M^{*(j)}$'s with $M^{*(j)}$ related to end-user $j$ and $M^*=\sum_{j=1}^{N} M^{*(j)}$.

\begin{proposition} \label{prop:mp}
The equilibrium pricing strategy under maximum price constraints is given by: $y_p^*=\hy$, and $y_o^*=\hy \pi_{M^*}$, where $M^* = $
\begin{equation}
\begin{split}
\underset{k\in\{0,\cdots,MN\}}{\arg\max} \hy\pi_{k}  \sum_{j=1}^{N}\left(\bL_o^{(j)}+\sum_{m=1}^{k^{(j)}}S_m\right)+\hy\sum_{j=1}^{N}\left(\sum_{m=1}^{k^{(j)}}S_m(1-\al_m^{(j)})p_m^{(j)} +\sum_{m=k^{(j)}+1}^{M}S_m p_m^{(j)}\right)\\
- \beta\E\left[\max\left\{\sum_{j=1}^{N}\left(L_o^{(j)}+\sum_{m=1}^{k^{(j)}}S_m\right),\sum_{j=1}^{N}\left(\sum_{m=1}^{k^{(j)}}S_m(1-\al_m^{(j)})\I_m^{(j)}+\sum_{m=k^{(j)}+1}^{M}S_m\I_m^{(j)}\right) \right\}\right].
\end{split}
\end{equation}

\end{proposition}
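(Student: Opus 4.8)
The plan is to exploit the threshold structure of the followers' equilibrium in \eqref{eq:x^*}: once $y_p$ is fixed, each end-user's proactive response depends on the prices only through the ratio $y_o/y_p$, and it changes only as this ratio crosses one of the values in the sorted collection $\Pi$. Accordingly, I would partition the feasible price set $\cY=[0,\hy]^2$ into the cells $\mathcal{R}_k:=\{(y_o,y_p)\in[0,\hy]^2:\ \pi_{k+1}<y_o/y_p\le\pi_k\}$ for $k\in\{0,1,\dots,MN\}$, with the conventions $\pi_0=1$ and $\pi_{MN+1}=0$ (the locus $y_p=0$ gives no peak revenue and is dominated, so it can be discarded). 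Because the elements of $\Pi$ are strictly ordered, on $\mathcal{R}_k$ exactly the top $k$ freshness--popularity products satisfy $\pi_m^{(j)}\ge y_o/y_p$, so by \eqref{eq:x^*} every end-user $j$ caches precisely the sources $m\le k^{(j)}$, where $\sum_j k^{(j)}=k$. In particular the downloads $x_m^{*(j)}$ are constant over $\mathcal{R}_k$, and since the carrier cost \eqref{Eq_User1SPCost} depends on the prices only through these downloads, $\eta$ is also constant on $\mathcal{R}_k$; call this value $\eta_k$.

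Next, on $\mathcal{R}_k$ the aggregate optimized payment is affine in the prices: from \eqref{Eq_User1Payment}, $\sum_{j}\mu^{*(j)}=y_o A_k+y_p B_k$ with $A_k:=\sum_j\!\big(\bL_o^{(j)}+\sum_{m=1}^{k^{(j)}}S_m\big)$ and $B_k:=\sum_j\sum_{m=1}^M (S_m-\al_m x_m^{*(j)})p_m^{(j)}$, and both coefficients are nonnegative since $x_m^{*(j)}\in\{0,S_m\}$ and $\al_m\le 1$. Hence, recalling \eqref{Eq_NSPProfit}, $R=y_oA_k+y_pB_k-\eta_k$ is nondecreasing in each of $y_o,y_p$ on $\mathcal{R}_k$, so its maximum over $\mathcal{R}_k$ is attained at the corner $y_p=\hy$ and (using $\pi_k\le 1$, so that this is the largest admissible $y_o$) $y_o=\hy\pi_k$, a point that indeed lies in $\mathcal{R}_k$ because $\pi_{k+1}<\pi_k$. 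Substituting $x_m^{*(j)}=S_m$ for $m\le k^{(j)}$ and $x_m^{*(j)}=0$ otherwise into $A_k$, $B_k$, and into the two arguments of the $\max$ defining $\eta_k$ reproduces, term by term, the bracketed expression in the statement evaluated at this $k$, i.e.\ $R=\hy\pi_k A_k+\hy B_k-\eta_k$.

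Finally, since the cells $\{\mathcal{R}_k\}_{k=0}^{MN}$ cover $\cY$ up to the dominated set $\{y_p=0\}$, maximizing $R$ over $\cY$ reduces to the finite maximization $\max_{0\le k\le MN}\big(\hy\pi_k A_k+\hy B_k-\eta_k\big)$, whose maximizer $M^*$ yields the equilibrium prices $y_p^*=\hy$ and $y_o^*=\hy\pi_{M^*}$ together with the displayed formula for $M^*$; existence is immediate since the outer problem ranges over a finite index set and each inner supremum is attained on the closed face $\{y_o=\hy\pi_k,\ y_p=\hy\}$ of $\mathcal{R}_k$. The step I expect to require the most care is the middle one: verifying that the threshold response, when substituted into $\sum_j\mu^{*(j)}$ and into both branches of the $\max$ inside $\eta$, produces exactly the stated objective, and handling the boundary indices $k=0$ (no caching, $y_o^*=y_p^*=\hy$) and $k=MN$ (all sources cached), with coincidences among the $\pi_m^{(j)}$ ruled out by the strict ordering of $\Pi$.
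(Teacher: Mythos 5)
Your proposal is correct and follows essentially the same route as the paper: the paper's (very terse) proof also rests on the threshold structure of \eqref{eq:x^*} reducing the carrier's problem to the finitely many candidate prices $y_p=\hy$, $y_o=\hy\pi_k$, with $M^*$ chosen as the profit-maximizing index. Your cell decomposition of $\cY$ and the monotonicity argument pinning the optimum at the corner of each cell simply supply the details the paper labels ``straightforward.''
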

\begin{proof}
The proof is straightforward by observing the threshold based structure to the proactive downloads. And the threshold $M^*$ maximizes the carrier's profit.
\end{proof}
We note that $\pi_0^{(j)}=1 \hspace{1mm} \forall j$ by definition, so the carrier can never get lower profit than of flat pricing. Next, we characterize the condition on the win-win equilibrium.
\begin{theorem}\label{Thrm_Disconnected}
Under the assumptions of Proposition \ref{prop:mp}, the win-win equilibrium $\Delta \mu^{(j)}>0 \hspace{1mm} \forall j \in \{1,2,\cdots,N\}$ and $\Delta R>0$ is attained if and only if:
\begin{equation}\label{Eq_WinWin_NUsers_Disconnected}
\begin{split}
\beta \sum_{j=1}^N \sum_{m=1}^M S_m p_m^{(j)} > \beta \left(S_1 +\sum_{j=1}^N \hat{L_o}^{(j)}\right) + \hat{y} (1 - \pi_1^{(1)}) \sum_{j=1}^N \hat{L_o}^{(j)}
\end{split}
\end{equation}
\end{theorem}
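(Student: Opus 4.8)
The plan is to reduce the whole win--win question to whether it pays the carrier to induce even a single proactively cached content. The starting point is Proposition~\ref{prop:mp}: under the maximum-price constraint the carrier plays $y_p^*=\hy$ and $y_o^*=\hy\pi_{M^*}$ for some threshold $M^*\in\{0,\dots,MN\}$, and since $\pi_0^{(j)}=1$ the choice $M^*=0$ reproduces the flat-pricing baseline exactly, so $R^*=\tilde R$ and $\Delta\mu^{(j)}=0$. Hence a win--win is possible only when the carrier strictly prefers some $M^*\ge 1$. I would first record the baseline quantities: from \eqref{Eq_User1Payment} with no proactive downloads, $\tilde\mu^{(j)}=\hy\bigl(\hat L_o^{(j)}+\sum_m S_m p_m^{(j)}\bigr)$, and, because $\hat L_p^{(j)}>\hat L_o^{(j)}$ makes the peak aggregate the active branch of the maximum in \eqref{Eq_User1SPCost}, $\tilde\eta=\beta\sum_{j,m}S_m p_m^{(j)}$, giving $\tilde R=\sum_j\tilde\mu^{(j)}-\tilde\eta$.

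\emph{User side (automatic).} For \emph{any} equilibrium with $M^*\ge 1$ every follower strictly gains. By \eqref{eq:x^*} a content $m$ is cached by user $j$ iff $\pi_m^{(j)}\ge\pi_{M^*}$, so relative to the baseline user $j$ pays the reduced rate $y_o^*=\hy\pi_{M^*}<\hy$ on all of $\hat L_o^{(j)}$, pays (weakly) less on each cached content, and the same on the rest; collecting terms in \eqref{Eq_User1Payment} gives $\Delta\mu^{(j)}\ge\hy\,\hat L_o^{(j)}\,(1-\pi_{M^*})>0$ (assuming $\hat L_o^{(j)}>0$ and, after the WLOG relabelling, $\pi_1^{(1)}=\max_{j,m}\pi_m^{(j)}<1$). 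Thus the statement collapses to: win--win $\iff\Delta R>0\iff M^*\ge 1$.

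\emph{Carrier side.} Write $R|_k$ for the profit at threshold $k$, so $R|_0=\tilde R$ and $R^*=\max_k R|_k$. In the regime where moving the top-$\pi$ content (and any further one) into the off-peak slot makes the off-peak aggregate the active branch, the cost at threshold $k$ is the deterministic $\beta\bigl(\sum_j\hat L_o^{(j)}+\Sigma_k\bigr)$ with $\Sigma_k$ the total cached volume, and a direct expansion of \eqref{Eq_User1Payment} yields a closed form for $\Delta R|_k:=R|_k-\tilde R$. At $k=1$ this reduces precisely to $\beta\sum_{j,m}S_m p_m^{(j)}-\beta\bigl(S_1+\sum_j\hat L_o^{(j)}\bigr)-\hy(1-\pi_1^{(1)})\sum_j\hat L_o^{(j)}$, i.e.\ $\Delta R|_1>0$ is exactly \eqref{Eq_WinWin_NUsers_Disconnected} ($S_1$ being the size of the globally top-$\pi$ content). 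It then remains to show $R^*=\max\{\tilde R,R|_1\}$, i.e.\ $\Delta R|_k\le\Delta R|_1$ for every $k\ge 1$: each additional cached content has a smaller $\pi$, so it trims the peak load less per cached byte while it further lowers everyone's off-peak revenue and raises the off-peak cost, and subtracting the two closed forms makes $\Delta R|_k-\Delta R|_1\le 0$ visible term by term using $\pi_1\ge\pi_2\ge\cdots$. Combining with the user side: win--win $\iff R^*>\tilde R\iff\Delta R|_1>0\iff$\eqref{Eq_WinWin_NUsers_Disconnected}, which gives both directions.

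\emph{Main obstacle.} The delicate point throughout is the $\E[\max\{\cdot,\cdot\}]$ in \eqref{Eq_User1SPCost}: the argument depends on knowing which branch is active at the baseline and at each $k\ge 1$. The hypothesis $\hat L_p^{(j)}>\hat L_o^{(j)}$ is only an inequality in expectation, whereas replacing $\E[\max\{\cdot,\cdot\}]$ by a single branch needs almost-sure (or at least distributional) dominance; and one must also verify the active branch does not flip back as $k$ grows, which is exactly what legitimizes the monotonicity $\Delta R|_k\le\Delta R|_1$. Once that regime is pinned down, the remainder is routine bookkeeping with \eqref{eq:x^*}, \eqref{Eq_User1Payment} and Proposition~\ref{prop:mp}.
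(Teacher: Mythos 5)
Your reduction breaks at the carrier step. The central claim that $R^*=\max\{\tilde R,R|_1\}$, i.e.\ that $\Delta R|_k\le\Delta R|_1$ for every $k\ge 1$, is false in general, and it is not what the paper establishes. The term-by-term comparison you invoke is only valid in the regime you assume at the outset, namely that already at $k=1$ (and at every larger $k$) the off-peak aggregate is the active branch of the maximum in \eqref{Eq_User1SPCost}. When instead the peak branch remains active, caching one more content at threshold $k$ removes $\beta\,\pi_m^{(j)}S_m$ from the carrier's cost, and for $\beta$ comparable to $\hy$ this outweighs the induced loss of revenue, so profit keeps increasing past $k=1$. Concretely, take $N=1$, $\hy=\beta=1$, $\hat{L}_o^{(1)}=1$, and three contents with $S_1=S_2=1$, $\pi_1^{(1)}=0.5$, $\pi_2^{(1)}=0.4$, and $S_3=100$, $p_3^{(1)}=0.9$, $\alpha_3=0.01$ (so $\pi_3^{(1)}=0.009$): the peak branch stays active for $k=1,2$ and one computes $\Delta R|_1=0<0.2=\Delta R|_2$, so the carrier's optimum is not in $\{0,1\}$ and the chain ``win--win $\iff R^*>\tilde R\iff\Delta R|_1>0$'' collapses (perturbing $\beta$ slightly downward even gives $\Delta R|_1<0<\Delta R|_2$). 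In the same peak-active regime your other identification also fails: there $\Delta R|_1=\beta S_1\pi_1^{(1)}-\hy(1-\pi_1^{(1)})\sum_j\hat{L}_o^{(j)}$, which is not condition \eqref{Eq_WinWin_NUsers_Disconnected}.

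This is precisely where the paper's proof takes a different, weaker route: it fixes each candidate price $y_o=\hy\pi_k$, at $k=1$ splits into the two cases of which branch of the maximum is active after caching (leading to \eqref{Eq_R1_Cond1_NUsers} versus \eqref{Eq_R1_Cond2_NUsers}), defines a win--win region $\mathcal{R}_k^{d}$ for each $k$, and concludes only through the set-theoretic statements $\mathcal{R}_{ww}^{d}=\bigcup_k\mathcal{R}_k^{d}$ and $\mathcal{R}_1^{d}\supseteq\mathcal{R}_2^{d}\supseteq\cdots$, i.e.\ the one-directional implication ``$\Delta R|_k>0\Rightarrow$ \eqref{Eq_WinWin_NUsers_Disconnected}''; it never needs pointwise profit monotonicity in $k$. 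To repair your argument you must carry the branch case analysis through (in the peak-active case the bound on $\Delta R|_k$ has to come from the branch condition itself, not from a comparison with $\Delta R|_1$), which is exactly the issue you flag as the ``main obstacle'' but leave unresolved. You are right, incidentally, that replacing $\E[\max\{\cdot,\cdot\}]$ by the dominant mean is not justified by $\hat{L}_p^{(j)}>\hat{L}_o^{(j)}$ alone; the paper makes that replacement silently, so flagging it is fair, but noting it does not substitute for the missing case analysis on which your reduction depends.
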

\begin{proof}
Proof is provided in Appendix \ref{NM-notrading}.
\end{proof}

Using this theorem the carrier can decide, before starting the game, whether to apply smarting pricing that will allow proactive caching or to use the baseline scenario of a flat pricing by setting $y_o=\hat{y}$. Next we investigate the impact of the connected end-users case on the win-win equilibrium.

\section{Connected end-users Stackelberg Game}\label{sec:two_connected_users}
We now consider the case where end-users are connected together (i.e. $\omega_p^{(i,j)}>0 \forall i,j\in\{1,2,\cdots,N\}$). Hence, there will be always a possibility of trading to happen between them. In the following, we pose the price and proactive download allocation problem, then we introduce our baseline decisions and define the performance metrics.

\subsection{The Stackelberg Game}

\subsubsection{\bf end-users optimization}

end-user $j$ minimizes his net payment by proactively downloading an amount $x_m^{(j)}$ from each source $m$ during the off-peak time as well as selling his proactive downloads to other end-users. He may also purchase some of the other end-users proactive downloads. The optimal proactive download decision $x_m^{*(j)}$ of end-user $j$ is as follows:
\begin{equation} \label{Eq_NOptJProNE}
x_m^{*(j)} = \left\{ \begin{array}
{r@{\quad,\quad}l}
S_m&\pi_m^{(j)} > \frac { y_o - \sum\limits_{{\substack{i=1 \\ i\neq j}}}^N y_s^{(j)} (1 - \gamma) \omega_p^{(j,i)} \chi_m^{(i,j)} \pi_m^{(i)} } {\prod\limits_{{\substack{i=1 \\ i\neq j}}}^N \overline{\omega}_p^{(i,j)} y_p + \sum\limits_{{\substack{i=1 \\ i\neq j}}}^N  \omega_p^{(i,j)}\biggr(\overline{\chi}_m^{(j,i)}y_p + \chi_m^{(j,i)} y_s^{(i)} \biggr)}\\
0&otherwise
\end{array} \right.
\end{equation}

where $\overline{\chi}_m^{(j,i)}= 1 - {\chi}_m^{(j,i)}$. As we can see, end-user $j$'s decision depends on his interest in the content $m$ and other end-users' interest. This leads us to the following claim:

\begin{claim}\label{claim_1}
For any content $m$, if $\pi_m^{(j)}> \pi_m^{(i)}, i \neq j$, then $x_m^{*(j)} \ge x_m^{*(i)}$ with equality holds when  $x_m^{*(j)} = 0$.
\end{claim}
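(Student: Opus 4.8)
\emph{Proof plan.} By the bang-bang structure in (\ref{Eq_NOptJProNE}), each equilibrium decision satisfies $x_m^{*(j)},x_m^{*(i)}\in\{0,S_m\}$, so the asserted inequality can fail only in the single configuration $x_m^{*(j)}=0$ and $x_m^{*(i)}=S_m$, and whenever $x_m^{*(j)}=0$ the two can be equal only if $x_m^{*(i)}=0$ as well. It therefore suffices to rule out $x_m^{*(j)}=0<S_m=x_m^{*(i)}$ under the hypothesis $\pi_m^{(j)}>\pi_m^{(i)}$; I would do this by contradiction, working at the (unique, by Proposition \ref{proposition_2}) equilibrium profile.

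The plan is to read off the two best-response conditions this configuration forces. Since $x_m^{*(i)}=S_m$, the threshold in (\ref{Eq_NOptJProNE}) is attained for user $i$; since $x_m^{*(j)}=0$, it is violated for user $j$. Before comparing them, I would simplify the endogenous trading quantities using the configuration itself: user $j$ holds none of content $m$, so the non-overlapping portion it could resell is empty and its resale revenue for content $m$ vanishes, whereas user $i$ holds all of content $m$, so no user can sell content $m$ to it. Feeding this back into (\ref{Eq_NOptJProNE}) --- equivalently, inspecting the sign of the coefficient of $x_m^{(j)}$, resp.\ $x_m^{(i)}$, in the affine-in-$x_m$ payment (\ref{Eq_UserNPayment}) evaluated at the equilibrium --- collapses the two conditions to scalar inequalities of the shape
\begin{equation*}
\pi_m^{(j)}\,G_j + H_j \;\le\; y_o \;\le\; \pi_m^{(i)}\,G_i + H_i ,
\end{equation*}
where $G_j,G_i>0$ are effective peak-avoidance rates (combinations of $y_p$ and the neighbours' selling prices weighted by the connectivities $\omega_p^{(\cdot,\cdot)}$) and $H_j,H_i\ge0$ collect the resale bonuses; in particular $\pi_m^{(j)}G_j+H_j\le\pi_m^{(i)}G_i+H_i$.

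Deriving a contradiction from this inequality together with $\pi_m^{(j)}>\pi_m^{(i)}$ is the step I expect to be the main obstacle. One cannot compare $G_j$ with $G_i$ or $H_j$ with $H_i$ term by term, because the former live over user $j$'s neighbourhood and the latter over user $i$'s, with asymmetric and endogenous selling prices; indeed $H_j$ can be the smaller of the two (a user caching $0$ sells nothing, a user caching $S_m$ buys nothing, for content $m$). I would resolve this by using the regulator bound $0\le y_s^{(k)}\le y_p$ to bound $G_i,H_i$ from above and $G_j$ from below by quantities that no longer distinguish $i$ from $j$, so that the strict gap $\pi_m^{(j)}-\pi_m^{(i)}>0$ forces $\pi_m^{(j)}G_j+H_j>\pi_m^{(i)}G_i+H_i$, contradicting the displayed inequality; under a symmetric connectivity pattern in the regime of Section~\ref{sec:two_connected_users} the comparison $G_j\ge G_i$, $H_j\ge H_i$ is immediate and the contradiction is direct. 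Hence the assumed configuration cannot occur, which proves $x_m^{*(j)}\ge x_m^{*(i)}$, with equality precisely when $x_m^{*(j)}=0$.
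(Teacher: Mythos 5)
Your first step---using the bang--bang structure of (\ref{Eq_NOptJProNE}) to reduce the claim to ruling out the configuration $x_m^{*(j)}=0$, $x_m^{*(i)}=S_m$---matches the paper. The gap is in the step you yourself flag as the main obstacle: with the \emph{configuration-based} simplification you adopt ($j$ resells nothing, $i$ buys nothing, but $j$ \emph{can} buy content $m$ from $i$), the two best-response conditions in the two-user case become
\begin{equation*}
\pi_m^{(j)}\Bigl[(1-\omega_p^{(i,j)})y_p+\omega_p^{(i,j)}y_s^{(i)}\Bigr]\;\le\; y_o\;<\;\pi_m^{(i)}y_p+(1-\gamma)\,\omega_p^{(i,j)}\,y_s^{(i)}\,\pi_m^{(j)},
\end{equation*}
i.e.\ $G_j=(1-\omega_p^{(i,j)})y_p+\omega_p^{(i,j)}y_s^{(i)}$, $H_j=0$, $G_i=y_p$, $H_i=(1-\gamma)\omega_p^{(i,j)}y_s^{(i)}\pi_m^{(j)}$. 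The regulator bounds $0\le y_s^{(i)}\le y_p$ cannot force $\pi_m^{(j)}G_j+H_j>\pi_m^{(i)}G_i+H_i$: they bound $G_j$ from below only by $(1-\omega_p^{(i,j)})y_p$ and $H_i$ from below only by $0$, and for $\omega_p^{(i,j)}$ close to $1$, $y_s^{(i)}$ small and $\gamma$ small both displayed inequalities hold simultaneously for any $y_o$ slightly below $\pi_m^{(i)}y_p$. So no contradiction follows from the threshold conditions plus the price caps alone; intuitively, if the low-interest user undersells ($y_s^{(i)}\approx 0$), the high-interest user is perfectly happy to cache nothing and buy from him. Your fallback remark that under symmetric connectivity ``$G_j\ge G_i$, $H_j\ge H_i$ is immediate'' is incorrect in your own parametrization: there $G_j\le G_i$ and $H_j\le H_i$, i.e.\ the inequalities point the wrong way.

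What closes the argument in the paper is extra structure that your plan never invokes: the buyer/seller roles for content $m$ are assigned by the \emph{interest ordering}, not by the candidate configuration---the higher-interest user $j$ is the only potential seller of $m$ and never its buyer ($\chi_m^{(j,i)}=0$, $\chi_m^{(i,j)}=1$; cf.\ the discussion around (\ref{Eq_xi_opt_1})--(\ref{Eq_xj_opt_1}) and the selling-price structure (\ref{Eq_ysj_opt})). With that assignment, $x_m^{*(j)}=0$ gives $\pi_m^{(j)}\le\bigl(y_o-(1-\gamma)y_s^{(j)}\omega_p^{(j,i)}\pi_m^{(i)}\bigr)/y_p\le y_o/y_p$, while user $i$'s threshold is $y_o/\bigl((1-\omega_p^{(i,j)})y_p+\omega_p^{(i,j)}y_s^{(j)}\bigr)\ge y_o/y_p$ because $y_s^{(j)}\le y_p$; hence $\pi_m^{(i)}<\pi_m^{(j)}\le y_o/y_p$ forces $x_m^{*(i)}=0$, which is exactly the paper's short proof (stated for $\omega_p^{(i,j)}=1$). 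To repair your argument you would have to justify this role assignment---equivalently, show that at equilibrium the lower-interest user does not undersell on a content where someone else has higher interest---which is an equilibrium-pricing argument, not a consequence of the bounds $0\le y_s^{(k)}\le y_p$.
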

\begin{proof}
Consider the case when $w_p^{(i,j)}=1$. Suppose $x_m^{*(j)}=0$ then from (\ref{Eq_NOptJProNE}) $\pi_m^{(j)}\leq\frac{yo}{y_p}$. Since $\pi_m^{(i)}<\pi_m^{(j)}$, then $\pi_m^{(i)}\leq \frac{yo}{y_p}$. This means that $\pi_m^{(i)}<\frac{yo}{y_s^{(j)}}$ which yields to $x_m^{*(i)}=0$.
Now, Suppose $x_m^{*(j)}=S_m$. If $x_m^{*(i)}=0$ then $x_m^{*(j)}>x_m^{*(i)}$. If $x_m^{*(i)}=S_m$ then $x_m^{*(j)}=x_m^{*(i)}$.
\end{proof}

Since we have the assumption that end-users interest are never equal, then there will be always an end-user who has the highest interest in content $m$ and this end-user will be the only seller for this content as he will offer a selling price that will encourage all the other $N-1$ end-users to wait until the peak time and purchase it from him. Moreover, since buyers here are symmetric and no competition exists in the buying process, it is sufficient to consider the case of one buyer and then we can generalize the result to cover the $N-1$ buyers. Therefore, in this section we will limit our study to the case of two end-users (one seller and one buyer) to investigate and show the gain of our proposed model.

We start our analysis by assuming that $\pi_m^{(j)}>\pi_m^{(i)}, \forall m \in \{1,2,\cdots,k_1\}$ and $\pi_m^{(i)}>\pi_m^{(j)}, \forall m \in \{k_1+1,k_1+2,\cdots,M\}$. According to claim \ref{claim_1}, we have 2 cases, the end-user who has a higher interest buys a certain data content while the other end-user waits until the peak-time, or both of them discarding buying in the off-peak time and buy it in the peak time from the carrier.

Hence (\ref{Eq_UserNPayment}) can be rewritten as the following:
\begin{equation} \label{Eq_UserPaymentJ}
\begin{split}
\mu^{(j)} &= y_o \biggl( \hat{L_o}^{(j)} + \sum_{m=1}^{M} x_m^{(j)} \biggr) + y_s^{(i)} \omega_p^{(i,j)} \sum_{m=1}^{M}  \alpha_m ( x_m^{(i)} - x_m^{(j)} ) \chi_m^{(j,i)} p_m^{(j)} \\
& + y_p (1-\omega_p^{(i,j)}) \sum_{m=1}^M (S_m - \alpha_m x_m^{(j)} )  p_m^{(j)}- y_s^{(j)} \omega_p^{(j,i)} (1-\gamma) \sum_{m=1}^{k_1} \alpha_m ( x_m^{(j)} - x_m^{(i)}) \chi_m^{(i,j)} p_m^{(i)}\\
& + y_p \omega_p^{(i,j)} \sum_{m=1}^{M} \biggl[ S_m - \alpha_m \biggl( x_m^{(j)} + (x_m^{(i)}- x_m^{(j)}) \chi_m^{(j,i)} \biggr) \biggr] p_m^{(j)}
\end{split}
\end{equation}

Thus, end-user $j$'s best-response is given by:
\begin{equation}\label{Eq_OptJStrategy_1}
(x_m^{*(j)}, y_s^{*(j)}) := \underset{\substack{x_m^{(j)} \in{[0,S_m]} \\ 0 \leq y_s^{(j)} \leq y_p}} \argmin \mu^{(j)}
\end{equation}

However, since we have $\pi_m^{(j)}>\pi_m^{(i)}, \forall m \in \{1,2,\cdots,k_1\}$ and $\pi_m^{(i)}>\pi_m^{(j)}, \forall m \in \{k_1+1,k_1+2,\cdots,M\}$, this means that $\chi_m^{(j,i)} = 0$ and $\chi_m^{(i,j)} = 1$ for all $m \in \{1,2,\cdots,k_1\}$ and we have $\chi_m^{(j,i)} = 1$ and $\chi_m^{(i,j)} = 0$ for all $m \in \{k_1+1,k_1+2,\cdots,M\}$. From this, we can simplify (\ref{Eq_NOptJProNE}) as follows:

For all $m \in \{1,2,\cdots,k_1\}$ and given the carrier optimum strategies,  we have:
\begin{equation} \label{Eq_xi_opt_1}
x^{*(i)}_{m} = \left\{ \begin{array}
{r@{\quad,\quad}l}
S_m & \pi^{(i)}_{m} > \frac {y_o} { y_p (1-\omega_p^{(i,j)}) + y_{s}^{(j)} \omega_p^{(i,j)}}\\
0 & otherwise
\end{array} \right.
\end{equation}
\begin{equation} \label{Eq_xj_opt_1}
x^{*(j)}_{m} = \left\{ \begin{array}
{r@{\quad,\quad}l}
S_m & \pi^{(j)}_{m} > \frac {y_o - (1-\gamma) y_{s}^{(j)} \omega_p^{(i,j)} \pi_{m}^{(i)} } { y_p}\\
0 & otherwise
\end{array} \right.
\end{equation}
\\
As we can see from (\ref{Eq_xi_opt_1}) and (\ref{Eq_xj_opt_1}), the proactive decisions of both end-users $j$ and $i$ are affected by the selling price of end-user $j$. Thus, end-user $j$ will evaluate all of the possible selling prices for the sources which he decided to cache and chooses the best one that maximize his revenue. In accordance, end-user $i$ takes his proactive decisions which minimizes his payment too. Hence the optimum flat selling price of end-user $j$ is given by:
\begin{equation}\label{Eq_ysj_opt}
y_{s}^{*(j)} = \frac{y_o-y_p(1-\omega_p^{(i,j)})\pi^{(i)}_{q^{*(j)}}}{\omega_p^{(i,j)}\pi^{(i)}_{q^{*(j)}}} \text{, where } q^{*(j)}=\argmax_{r \in\{1,\cdots,M^{*(j)}\}} \pi^{(i)}_{m} \text{, and } M^{*(j)} \leq k_1
\end{equation}
Similarly, for all $m \in \{k_1+1,k_1+2,\cdots,M\}$ we have:
\begin{equation}\label{Eq_xj_opt_2}
x^{*(j)}_{m} = \left\{ \begin{array}
{r@{\quad,\quad}l}
S_m & \pi^{(j)}_{m} > \frac {y_o} { y_p (1-\omega_p^{(j,i)}) + y_{s}^{(i)} \omega_p^{(j,i)}}\\
0 & otherwise
\end{array} \right.
\end{equation}
\begin{equation}\label{Eq_xi_opt_2}
x^{*(i)}_{m} = \left\{ \begin{array}
{r@{\quad,\quad}l}
S_m & \pi^{(i)}_{m} > \frac {y_o - (1-\gamma) y_{s}^{(i)} \omega_p^{(j,i)} \pi_{m}^{(i)} } { y_p}\\
0 & otherwise
\end{array} \right.
\end{equation}
\\
and the optimum flat selling price of end-user $i$ is given by:
\begin{equation}\label{Eq_ysi_opt}
y_{s}^{*(i)} = \frac{y_o-y_p(1-\omega_p^{(i,j)})\pi^{(j)}_{q^{*(i)}}}{\omega_p^{(i,j)}\pi^{(j)}_{q^{*(i)}}} \text{, where } q^{*(i)}=\argmax_{r \in\{1,\cdots,M^{*(i)}\}} \pi^{(j)}_{m} \text{, and } M^{*(i)} \leq k_1
\end{equation}


\subsubsection{\bf Carrier Optimization}

Now, the carrier's expected profit $R$ is given by:
\begin{equation} \label{Eq_SPProfit}
R := \mu^{*(j)} + \mu^{*(i)} - \eta
\end{equation}
where $\eta$ is defined as in (\ref{Eq_NSPCost}). The carrier aims to find the prices $y_o^*,y_p^*$ and the commission $\gamma^*$ that maximizes his profit as in (\ref{Eq_SPOpt}). This implicitly involves the assignment of a threshold $M^{*(j)} \in \{1,2,\cdots,k_1\}$ for end-user $j$ which defines the data contents he will be able to cache and a similar threshold $M^{*(i)} \in \{k_1+1,k_1+2,\cdots,M\}$ for end-user $i$. Also, carrier's prices $y_o^*$ and $y_p^*$ affects the selling prices of both users $y_s^{*(j)}$ and $y_s^{*(i)}$.

\begin{proposition} \label{Prop_SPEqPolicy}
The carrier equilibrium pricing strategy under maximum price constraints is given by:
\begin{equation}
 (y_p^*, y_o^*, \gamma^*) = \underset{\substack{y_o, y_p \in [0,\hat{y}] \\ \gamma \in [0,1] }} \argmax \left(\mu^{(j)} + \mu^{(j)} - \eta \right)
\end{equation}
\end{proposition}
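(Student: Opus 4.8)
The plan is to follow the same route as the proof of Proposition~\ref{prop:mp}, now keeping track of the commission $\gamma$ and of the fact that the selling prices are themselves functions of the carrier's policy. Starting from the general carrier problem~\eqref{Eq_SPOpt}, specializing the admissible set $\cY$ to the maximum price constraints gives $(y_o,y_p)\in[0,\hat{y}]^2$, and $\gamma\in[0,1]$ by definition. Substituting the followers' unique equilibrium responses — the proactive thresholds~\eqref{Eq_xi_opt_1}, \eqref{Eq_xj_opt_1}, \eqref{Eq_xj_opt_2}, \eqref{Eq_xi_opt_2} together with the optimal selling prices~\eqref{Eq_ysj_opt}, \eqref{Eq_ysi_opt}, whose existence and uniqueness come from Proposition~\ref{proposition_2} — turns the profit $R$ in~\eqref{Eq_SPProfit} into a well-defined function of $(y_o,y_p,\gamma)$ alone. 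Since $R$ is then maximized over the compact box $[0,\hat{y}]^2\times[0,1]$, the displayed $\argmax$ is attained, which is the assertion.

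To turn this into the operational statement used later, I would reduce the maximization to a finite comparison, exactly as in the disconnected case. First, $y_p^{*}=\hat{y}$: raising $y_p$ strictly increases every price-weighted term of $\mu^{*(j)}+\mu^{*(i)}$, while it can only make proactive caching more attractive and hence shift load from the peak to the off-peak period, so the cost $\eta$ in~\eqref{Eq_NSPCost} is non-increasing in $y_p$; consequently $R$ is non-decreasing in $y_p$ and the peak price is pushed to the regulator cap $\hat{y}$. Second, with $y_p=\hat{y}$ fixed, each threshold in~\eqref{Eq_xi_opt_1}--\eqref{Eq_xi_opt_2} changes only when $y_o$ — and the induced $y_s^{*(j)}, y_s^{*(i)}$, which are affine in $y_o$ through~\eqref{Eq_ysj_opt}, \eqref{Eq_ysi_opt} — crosses one of finitely many critical values fixed by the ordered freshness--popularity products $\{\pi_m^{(j)}\}$ and $\{\pi_m^{(i)}\}$ of the two users. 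Between consecutive critical values the caching configuration $(M^{*(j)},M^{*(i)})$ is constant and $R$ is monotone in $y_o$ at fixed configuration, so an optimizer sits at one of the breakpoints $y_o=\hat{y}\pi_k$. Third, for each configuration $\gamma$ enters only through the selling prices and through the commission revenue collected by the carrier, so the optimal $\gamma\in[0,1]$ is the one balancing the extra commission against the change in $\eta$ and in the users' payments produced by the shifted thresholds; comparing the resulting finite list of $(k,\gamma)$ candidates yields $(y_p^{*},y_o^{*},\gamma^{*})$.

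The main obstacle, and the point of departure from Proposition~\ref{prop:mp}, is the nested coupling created by $\gamma$: the selling prices $y_s^{*(j)},y_s^{*(i)}$ depend on $(y_o,y_p,\gamma)$ and on the buyer's thresholds, while those thresholds depend back on the selling prices. I would break this loop using the closed forms~\eqref{Eq_ysj_opt} and~\eqref{Eq_ysi_opt}: substituting them into~\eqref{Eq_xj_opt_1} and~\eqref{Eq_xi_opt_2} eliminates $y_s^{*}$, so every caching threshold becomes a function of $(y_o,\hat{y},\gamma)$ only, and the breakpoint argument above carries over on the two-dimensional grid $(y_o,\gamma)$ instead of on $y_o$ alone. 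One further has to check that the maximizing index $q^{*(j)}$ (respectively $q^{*(i)}$) inside~\eqref{Eq_ysj_opt} is itself piecewise constant in $(y_o,\gamma)$, which again follows from the fixed ordering of the $\pi$'s; this guarantees that the candidate set is finite and completes the reduction to a finite maximization.
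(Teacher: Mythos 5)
The paper gives no proof of Proposition~\ref{Prop_SPEqPolicy} at all: it is treated as essentially definitional, i.e.\ a restatement of the leader's best response \eqref{Eq_SPOpt} with the two-user profit \eqref{Eq_SPProfit} under the maximum-price constraint, the followers' responses being pinned down by Proposition~\ref{proposition_2} (note also that $\mu^{(j)}+\mu^{(j)}$ in the statement is a typo for $\mu^{(j)}+\mu^{(i)}$, which you implicitly correct). Your first paragraph is exactly this backward-induction observation, so you are consistent with the paper; your second and third paragraphs go further and reconstruct the operational reduction that the paper only uses later, in Appendix~B for Theorem~\ref{Thrm_Connected}, namely $y_p^*=\hat{y}$, candidate off-peak prices $y_o=\hat{y}\pi_k$, and a case analysis over $\gamma$. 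Two caveats on your elaboration. First, attainment of the $\argmax$ does not follow from compactness alone: once the followers' threshold responses \eqref{Eq_xi_opt_1}--\eqref{Eq_xi_opt_2} are substituted, $R(y_o,y_p,\gamma)$ is piecewise defined and discontinuous at the threshold crossings, so existence of a maximizer really comes from your own finite-candidate (breakpoint) reduction, or equivalently from the paper's restriction to the finite grid $y_o=\hat{y}\pi_k$, not from ``maximized over a compact box.'' Second, the closed-form selling prices \eqref{Eq_ysj_opt} and \eqref{Eq_ysi_opt} do not depend on $\gamma$; the commission enters only the sellers' caching thresholds \eqref{Eq_xj_opt_1} and \eqref{Eq_xi_opt_2}, so the nested coupling you set out to break is simpler than you describe, and your substitution step still goes through (indeed more easily). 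With those corrections your argument is sound, and it supplies justification for facts ($y_p^*=\hat{y}$, monotonicity between breakpoints) that the paper asserts without proof.
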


\subsection{Performance Metrics}
Here, we investigate the potential of the \textit{win-win} situation, i.e., $\Delta{\mu^{(j)}} > 0 \hspace{1mm} \forall j \in \{1,2,\cdots,N\}$ and $\Delta{R} > 0$.
Inspired by the two connected end-users game discussed above, we can extend it to the case where we have $N$ end-users. Now, for each data content $m$, one of the end-users will be the most interested user in this data content $m$ among other end-user. This end-user will download that data content during the off-peak time and then try to trade it in the peak time to the other end-users who did not buy in the off-peak time as shown in claim \ref{claim_1}. Given the diversity of end-users' interests each end-user will proactively cache some data contents and buy some data contents from the other end-users and finally go to the carrier to complete or refresh his downloads. Now, we characterize the conditions for the win-win equilibrium in the following theorem.

\begin{theorem} \label{Thrm_Connected}
Under the assumptions of Proposition \ref{Prop_SPEqPolicy}, the win-win equilibrium $\Delta{\mu^{(j)}} > 0 \hspace{1mm} \forall j \in \{1,2,\cdots,N\}$ and $\Delta{R} > 0$ is attained if and only if
\begin{equation} \label{eq:cond1}
\begin{split}
\beta \sum_{j=1}^{N} \sum_{m=1}^{M} S_m p_m^{(j)} > \beta \left(S_1 + \sum_{\substack{j=1 }}^{N} \hat{L_o}^{(j)} \right) + \hat{y}(1 - \pi_{1}^{(1)})\sum_{\substack{j=1 }}^{N} \hat{L_o}^{(j)} + \hat{y} S_1 \left( 1-\frac{\pi_{1}^{(1)}}{\max\limits_{\substack{j\in \{2,\cdots,N\}}}\pi_{1}^{(j)}} \right) \sum_{j=2}^{N} \pi_1^{(j)}\\
\end{split}
\end{equation}
\end{theorem}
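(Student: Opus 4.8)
The plan is to follow the template of the proof of Theorem~\ref{Thrm_Disconnected}; the only genuinely new ingredient is the accounting of the trade flows and commissions. First I would reduce the win--win statement to a statement about $\Delta R$ alone. If the carrier's equilibrium pricing (Proposition~\ref{Prop_SPEqPolicy}) involves any proactive caching, then $y_o^*=\hy\,\pi_{M^*}<\hy=y_p^*$, and since every user always retains the option of ignoring both pre-caching and trading, user $j$'s optimized payment is bounded above by $y_o^*\hat{L_o}^{(j)}+\hy\sum_m S_m p_m^{(j)}<\tilde\mu^{(j)}$, so $\Delta\mu^{(j)}>0$ for all $j$; whereas if the equilibrium has no caching it coincides with the baseline and $\Delta R=\Delta\mu^{(j)}=0$. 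Hence the win--win is attained if and only if the carrier's optimum strictly beats flat pricing, i.e.\ $\Delta R>0$.

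Next, exactly as in Theorem~\ref{Thrm_Disconnected}, I would argue that the carrier's most profitable deviation from flat pricing is the one that caches only the single globally most popular source, namely content $1$ of the user I label $1$, with $\pi_1^{(1)}=\max_{j,m}\pi_m^{(j)}$: this source carries the smallest per-unit revenue sacrifice, and once it is pre-cached the aggregate off-peak load $S_1+\sum_j\hat{L_o}^{(j)}$ becomes the system bottleneck, so that any further caching only raises the carrier's cost. Thus $\Delta R>0$ is equivalent to $R>\tilde R$ at the pricing $y_p^*=\hy$, $y_o^*=\hy\,\pi_1^{(1)}$, under which only user $1$ pre-caches content $1$ and then trades it, in the peak slot, to users $2,\dots,N$.

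The remaining work is to evaluate this profit with trading. Using the $N$-player analogue of~\eqref{Eq_ysj_opt}, the equilibrium selling price is the largest price at which every prospective buyer $j\ge 2$ is still (weakly) better off waiting than pre-caching content $1$; the binding buyer is the one maximizing $\pi_1^{(j)}$ over $j\ge 2$, which gives $y_s^{*(1)}=\hy\,\pi_1^{(1)}/\max_{j\ge2}\pi_1^{(j)}$. I would then expand $\sum_l\mu^{*(l)}$, as read off from~\eqref{Eq_UserNPayment}, term by term: every user's off-peak base payment drops by $\hy(1-\pi_1^{(1)})\hat{L_o}^{(l)}$; user $1$'s own content-$1$ payment (pre-cache cost $\hy\,\pi_1^{(1)}S_1$ plus peak-time refresh) is unchanged from the baseline because $\pi_1^{(1)}=\alpha_1 p_1^{(1)}$; and, for each $j\ge 2$, the trade replaces the baseline payment $\hy S_1 p_1^{(j)}$ to the carrier by a payment $y_s^{*(1)}\alpha_1 S_1 p_1^{(j)}$ to user $1$ (of which the carrier retains the fraction $\gamma^*$) together with a refresh payment $\hy(1-\alpha_1)S_1 p_1^{(j)}$ to the carrier, so that after the commission cancels user $1$'s receipt the net change in $\sum_l\mu^{*(l)}$ from the trade is $S_1\pi_1^{(j)}\big(y_s^{*(1)}-\hy\big)$; summing over $j$ and substituting $y_s^{*(1)}$ collapses this to $-\hy S_1\big(1-\pi_1^{(1)}/\max_{j\ge2}\pi_1^{(j)}\big)\sum_{j\ge2}\pi_1^{(j)}$. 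For the cost term I would use the same accounting as in Theorem~\ref{Thrm_Disconnected}: $\tilde\eta=\beta\,\E[\text{aggregate peak load}]=\beta\sum_j\sum_m S_m p_m^{(j)}$, while once content $1$ is pre-cached $\eta=\beta\big(S_1+\sum_j\hat{L_o}^{(j)}\big)$. Assembling $\Delta R=\big(\sum_l\mu^{*(l)}-\sum_l\tilde\mu^{(l)}\big)-(\eta-\tilde\eta)$ and imposing $\Delta R>0$ yields precisely~\eqref{eq:cond1}; the ``only if'' direction is then immediate from the first two steps, since if~\eqref{eq:cond1} fails the best proactive deviation loses money and the carrier plays flat.

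I expect the main obstacle to be the determination of $y_s^{*(1)}$ and the choice of $\gamma^*$ for general $N$, together with the connectivities $\omega_p^{(i,j)}$ that the theorem statement suppresses: the paper derives the selling price only for two players, so one must check that the buyer with the largest $\pi_1^{(\cdot)}$ among $\{2,\dots,N\}$ is indeed binding, reconcile $y_s^{*(1)}$ with the cap $y_s\le y_p$, and verify that $\gamma^*$ is set so that the commission revenue makes the net trade contribution reduce to the clean term in~\eqref{eq:cond1}. As in Theorem~\ref{Thrm_Disconnected} one also has to justify treating the baseline as peak-load-dominated and the post-caching configuration as off-peak-load-dominated rather than carrying the $\E[\max\{\cdot,\cdot\}]$ through verbatim; and Step~2 (no second source is ever worth caching once the off-peak side is the bottleneck) must be re-argued with the trade revenue included, since caching a further source would also open a further trading channel.
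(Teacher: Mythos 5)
Your step 3 --- evaluating the deviation $y_p=\hy$, $y_o=\hy\,\pi_1^{(1)}$ with selling price $y_s^{*(1)}=\hy\,\pi_1^{(1)}/\max_{j\ge 2}\pi_1^{(j)}$ and collapsing the per-trade accounting to the term $-\hy S_1\bigl(1-\pi_1^{(1)}/\max_{j\ge2}\pi_1^{(j)}\bigr)\sum_{j\ge2}\pi_1^{(j)}$ --- reproduces the paper's computation of its region $\mathcal{R}_1^{c}$, and the difficulties you flag (the $\E[\max\{\cdot,\cdot\}]$ case split, the cap $y_s\le y_p$, the role of $\gamma$) are glossed over in the paper as well. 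The genuine gap is your step 2: the claim that caching only the single most popular source is the carrier's most profitable deviation from flat pricing because ``any further caching only raises the carrier's cost'' is false in general, and your ``only if'' direction rests on it. When the baseline peak load greatly exceeds the off-peak load (many comparably popular, fresh contents and small $\hat{L_o}^{(j)}$), the profit-maximizing threshold $M^*$ of Proposition \ref{prop:mp} caches many sources so as to balance the two loads, and the profit gain there far exceeds the gain of the single-source deviation; so the equilibrium strategy is generally not the $k=1$ strategy, and ``the equilibrium beats flat pricing'' cannot be reduced to ``the $k=1$ deviation beats flat pricing'' via profit-maximality.

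What the paper uses instead (and what you need) is a weaker, region-level statement: for every candidate strategy $y_o=\hy\pi_k$ (and every $\gamma$, including $\gamma=1$, which reduces to the disconnected regions of Theorem \ref{Thrm_Disconnected}), the set of interest profiles at which that strategy yields a win-win, $\mathcal{R}_k^{c}$, is nested inside $\mathcal{R}_1^{c}$, because passing to a larger threshold only inflates the right-hand side of the defining inequality (larger cached load $\beta\sum_{m\le k}S_m$ in place of $\beta S_1$, additional foregone-revenue terms $\hy\sum_m(\pi_m^{(\cdot)}-\pi_k)S_m$, and the corresponding trade terms); combined with the observation that a win-win at equilibrium occurs iff it occurs at some candidate strategy, the union $\bigcup_k\mathcal{R}_k^{c}$ collapses to $\mathcal{R}_1^{c}$, giving both directions of the equivalence. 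So you should replace the profit-maximality argument by this monotonicity-in-$k$ of the win-win conditions (which itself deserves a check for the trading terms, since the paper only asserts it); as written, your ``only if'' direction is unsupported even though the final inequality \eqref{eq:cond1} and the accounting leading to it are correct.
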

\begin{proof}
proof is provided in Appendix B.
\end{proof}
We can see that the carrier will gain some more profit due to this trading process when it applies the smart pricing policy. On the other hand, end-users will gain some saving from both proactive caching and the trading process.
\begin{proposition} \label{Prop_winwin}
The probability of a win-win situation in the case of the connected users is more than its probability in the case of disconnected users.
\end{proposition}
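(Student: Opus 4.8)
The plan is to compare the two win--win inequalities already established, namely condition (\ref{Eq_WinWin_NUsers_Disconnected}) of Theorem \ref{Thrm_Disconnected} for the disconnected case and condition (\ref{eq:cond1}) of Theorem \ref{Thrm_Connected} for the connected case, and to show that the set of system instances satisfying the connected condition contains the set satisfying the disconnected one. Both inequalities have the same left-hand side $\beta \sum_{j=1}^{N}\sum_{m=1}^{M} S_m p_m^{(j)}$, and their right-hand sides share the common part $\beta\bigl(S_1+\sum_{j=1}^N \hat{L_o}^{(j)}\bigr)+\hat{y}(1-\pi_1^{(1)})\sum_{j=1}^N \hat{L_o}^{(j)}$. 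Hence the connected right-hand side equals the disconnected right-hand side plus the single extra term
\[
T \;:=\; \hat{y}\, S_1 \left(1-\frac{\pi_1^{(1)}}{\max_{j\in\{2,\cdots,N\}}\pi_1^{(j)}}\right)\sum_{j=2}^{N}\pi_1^{(j)} .
\]

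The key step is to show $T\le 0$. By the source-ordering conventions adopted in Sections \ref{sec:disconnected_users} and \ref{sec:two_connected_users}, content $1$ is the content carrying the globally largest freshness--popularity product, and (as used throughout Section \ref{sec:two_connected_users}) for each content there is a single most-interested end-user; taking that user to be user $1$ gives $\pi_1^{(1)}\ge \pi_1^{(j)}$ for all $j\neq 1$, so $\pi_1^{(1)}\ge \max_{j\in\{2,\cdots,N\}}\pi_1^{(j)}$. Therefore $1-\pi_1^{(1)}/\max_{j\in\{2,\cdots,N\}}\pi_1^{(j)}\le 0$, and since $\hat{y}\ge 0$, $S_1\ge 0$ and $\pi_1^{(j)}=\alpha_1 p_1^{(j)}\ge 0$, we conclude $T\le 0$.

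Putting these together: on any instance satisfying (\ref{Eq_WinWin_NUsers_Disconnected}) we have $\text{LHS} > (\text{disconnected RHS}) \ge (\text{disconnected RHS})+T = (\text{connected RHS})$, so (\ref{eq:cond1}) holds as well. Viewing the model parameters $\bigl(\{S_m\},\{p_m^{(j)}\},\{\alpha_m\},\{\hat{L_o}^{(j)}\},\beta,\hat{y}\bigr)$ as drawn from a prior, the event ``disconnected win--win'' is contained in the event ``connected win--win'', hence $P(\text{connected win--win})\ge P(\text{disconnected win--win})$. For the strict inequality one invokes the standing assumption that users' interests are never equal, which makes $\pi_1^{(1)}>\max_{j\in\{2,\cdots,N\}}\pi_1^{(j)}$ strict, so $T<0$ whenever $S_1>0$ and some $\pi_1^{(j)}>0$ with $j\ge 2$; this opens a positive-probability band of parameters for which the connected condition holds while the disconnected one fails, giving strict dominance.

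I expect the main obstacle to be bookkeeping rather than analysis: one must verify that ``content $1$'' and ``user $1$'' denote the same objects in both theorems (the globally top freshness--popularity pair, together with its owner), and one must make explicit the probability space over which ``probability of a win--win situation'' is measured so that the set-containment argument genuinely yields an inequality between probabilities. Establishing the strict version additionally requires checking that the excluded band has positive measure under the assumed prior; this is immediate for any absolutely continuous prior but should be stated rather than left implicit.
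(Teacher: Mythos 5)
Your proposal is correct and follows essentially the same route as the paper's own proof, which simply compares condition (\ref{eq:cond1}) with condition (\ref{Eq_WinWin_NUsers_Disconnected}) and observes that the extra term is non-positive (since user $1$ is the most interested user in content $1$, so $\pi_1^{(1)}\geq\max_{j\geq 2}\pi_1^{(j)}$), hence the connected win--win region contains the disconnected one. Your added care in fixing the probability space and in justifying strictness via the never-equal-interests assumption only makes explicit what the paper leaves implicit.
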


\begin{proof}
The proof is straightforward by comparing (\ref{eq:cond1}) with (\ref{Eq_WinWin_NUsers_Disconnected}) and noticing that the win-win equilibrium region in the connected users case expands monotonically
\end{proof}

\section{Numerical Results} \label{sec:res}
To verify the derived results, we start with the case of one end-user only with varying interest to study the effect of end-user interest on the system performance. This end-user consumes data from $M=5$ data sources in the peak time. The amount consumed from each data source is $S=100$ units. Sources have different freshness $[1.0, 0.95, 0.90, 0.85, 0.80]$. We start with $p_m=0$ for all data contents and then increase it until we reach $p_m=1$. We take $\hy=1$ and $\beta=0.75$. We plot the profit and saving gains $\Delta R$ and $\Delta \mu$ versus end-user interest. Moreover, we compare the off-peak and peak loads of our model with the flat pricing model. The results are depicted in Fig. \ref{fig:Performance_1User}.

\begin{figure}
	\centering
		\includegraphics[width=0.7\textwidth]{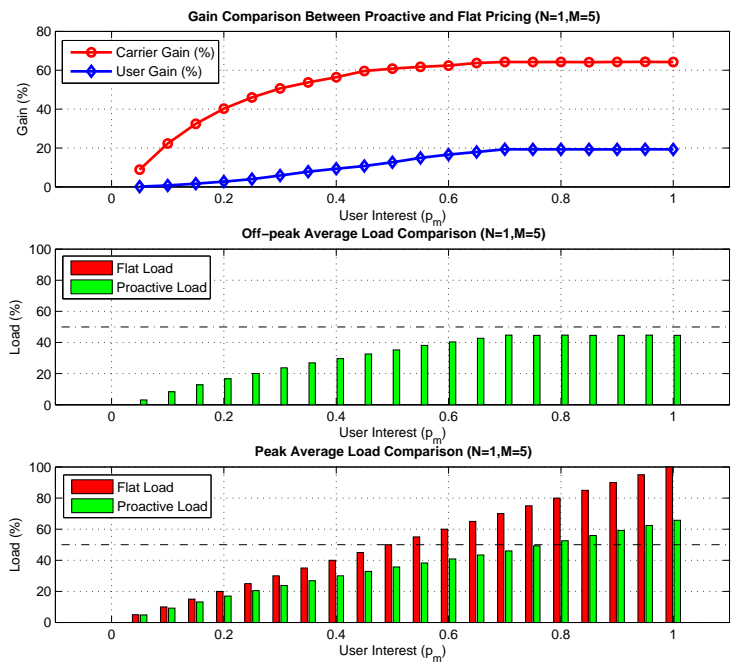}
	\caption{System Performance for ($N=1$) versus end-user interest}
	\label{fig:Performance_1User}
\end{figure}

As we can notice in Fig. \ref{fig:Performance_1User}, a win-win equilibrium is attained whenever end-user interest is positive. This is clear as the smart pricing incentives the end-user to apply proactive download which reduces his average payment. On the other hand, proactive download reduces the carrier peak load which yields reducing the cost. These gains increase as the interest of the end-user increases since he can cache more contents and consequently the peak load decreases. The carrier gains almost $65\%$ while end-user saves about $20\%$ of his payment. Moreover, we can see that allowing end-user to download some of the data contents proactively shifts the peak time load to the off-peak time and hence we have almost flat network utilization.

Next, we study the effect of engaging more end-users in the network. Let us consider a system consisting of $N=1,2,\cdots ,20$ end-users consuming data from the same $M=5$ data sources in the peak time. Those end-users are disconnected and hence they can not trade their proactive downloads. We start by one end-user and then we add more end-users to the network. Here, end-users interests are the same for each content $m$. The freshness of the data contents are $[1.0, 0.95, 0.90, 0.85, 0.80]$. We plot the profit and savings gains $\Delta R$ and $\Delta \mu$ versus number of users. we compare the off-peak and peak loads of our model with the flat pricing model. The results are depicted in Fig. \ref{fig:Performance_NUsers_Uniform}.

\begin{figure}
	\centering
		\includegraphics[width=0.7\textwidth]{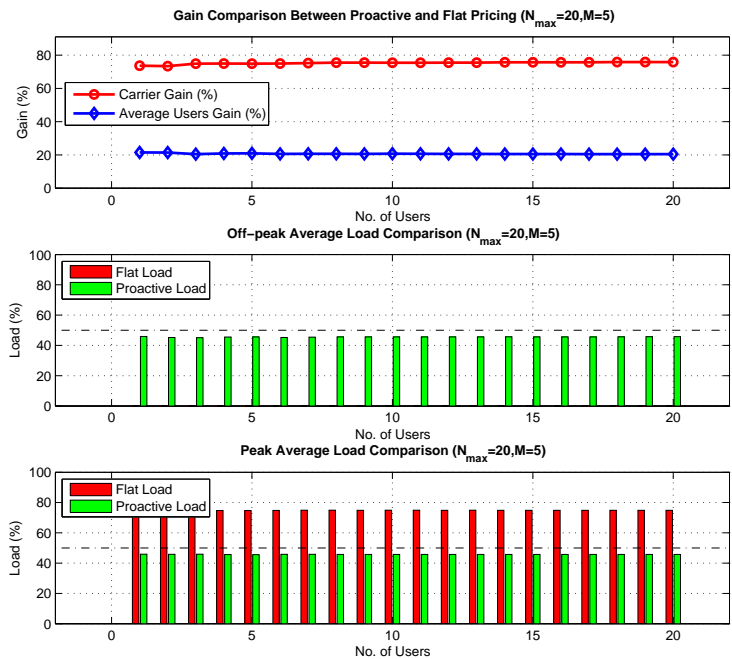}
	\caption{System Performance versus no. of end-users (fixed interests)}
	\label{fig:Performance_NUsers_Uniform}
\end{figure}

As we can notice in Fig. \ref{fig:Performance_NUsers_Uniform}, regardless of the number of end-users the carrier can manage to attain a fixed gain. This is in fact because the carrier smart pricing will not change as it is determined by the highest interested end-user and all end-users have the same interest. So, adding new end-users will not change the carrier prices and hence all end-users will download the same amount proactively. The carrier can attain an equilibrium and achieve almost $75\%$ profit gain. On the other hand, the end-users can achieve an average payment saving of almost $20\%$. Moreover, the network load is distributed between the off-peak and peak time which also relieves the network congestion during the peak time and tackles the problem of network under utilization during the off-peak time.

Now, let us investigate the effect of adding more end-users with increasing interests. We generate an increasing contents interest for the end-users and add them one by one to the network. We keep all the other system parameter as above. We plot the carrier gain versus the number of end-users and we also compare the off-peak and peak loads with the flat pricing model. The results are depicted in Fig. \ref{fig:Performance_NUsers_Increasing}.

\begin{figure}[h!]
	\centering
		\includegraphics[width=0.7\textwidth]{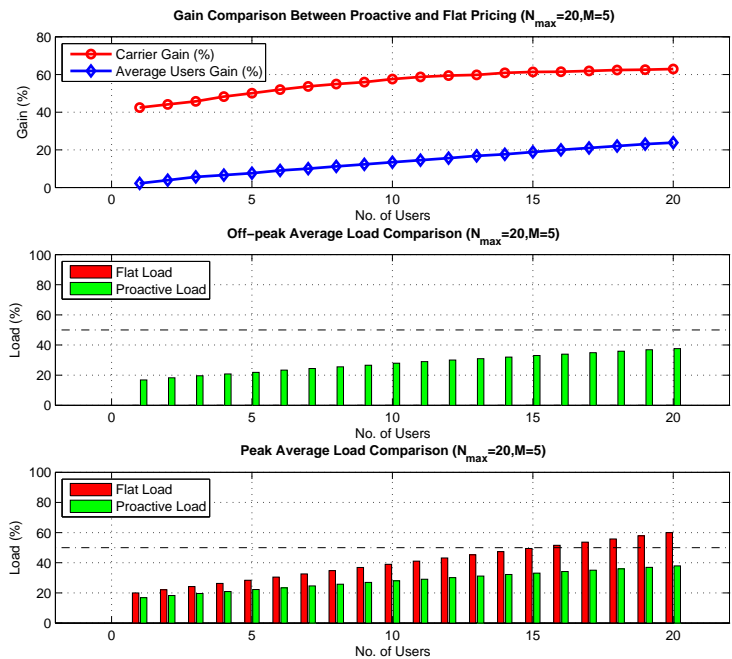}
	\caption{System Performance versus no. of end-users (increasing interest)}
	\label{fig:Performance_NUsers_Increasing}
\end{figure}

As we can notice in Fig. \ref{fig:Performance_NUsers_Increasing}, the carrier achieves more gain when more interested end-users are engaged in the network. That is because adding more interested users makes the carrier increase the off-peak price which means increasing the revenue. Moreover, the average end-users saving increases as the number of more interested end-users increase. This result gives the carrier a signal to invest more in the contents popularity to engage more end-users in the network. Moreover, the network load is distributed between the off-peak and peak time which also relieves the network congestion during the peak time and tackles the problem of network under utilization during the off-peak time.

From Theorem \ref{Thrm_Connected}, we saw that trading between end-users helps both the carrier and the end-users to gain more. To verify this result, let us consider a  system consisting of two end-users consuming data from $M = 5$ data sources in the peak time according to different interests where end-user $j$ is more interested than end-user $i$ as follows: the interest of end-user $j$ is $p_m^{(j)}=[1.0,1.0,1.0,1.0,1.0]$ while the interest of end-user $i$ is $p_m^{(i)}=[0.95,0.90,0.85,0.80,0.75]$. All data contents have the same freshness $\alpha_m =0.90$. The off-peak loads of both end-users are $L_o^{(j)}=0$ and $L_o^{(i)}=0$. The amounts consumed from all sources are equal to $S = 100$ units. We investigate the effect of connectivity between the two end-users on the carrier profit and the end-users gains. We also investigate the effect of end-users connectivity on the off-peak and peak loads. The results are depicted in Fig. \ref{Performance_2Users_Trading}.

\begin{figure}[h!]
  \centering
  \includegraphics[width=0.8\textwidth]{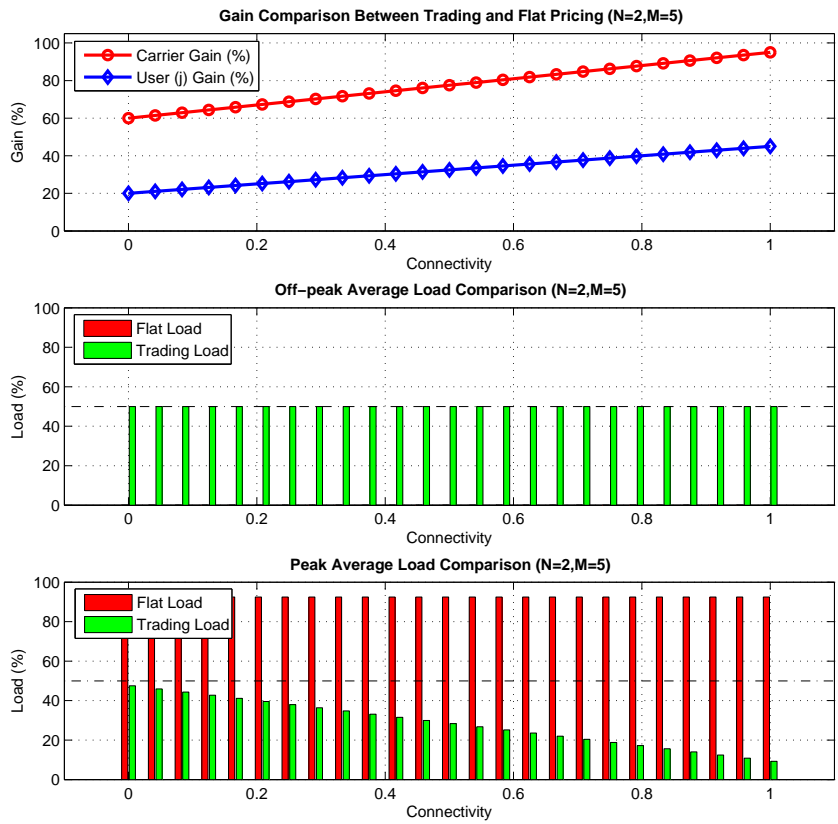}
  \caption{System Performance versus connectivity between end-users}\label{Performance_2Users_Trading}
  \label{fig:Performance_2Users_Trading}
\end{figure}

As we can see, although when the end-users are disconnected (i.e. $\omega_p^{(j,i)} = 0$) they were able to download proactively some of the data contents during the off-peak time and then refresh and download the remaining data contents during the peak time. The carrier achieves a gain of almost $60\%$ while the most interest user (end-user $j$) achieves almost $20\%$. Although the peak load in the flat pricing model hits $95\%$, it is distributed between the off-peak and peak times in the case of our model. The gain of the carrier profit and the saving of end-user $j$ increase as the connectivity increases. On the other hand, when end-users are fully connected (i.e. $\omega_p^{(j,i)} = 1$), the carrier achieves a gain of almost $98\%$ and end-user $j$ achieves almost $45\%$. We can also see that the peak load decreases as the connectivity between end-users increases. The reason behind that is, when users get connected, only end-user $j$ downloads data contents and end-user $i$ wait for him. Moreover, this plot supports the results we mentioned in Proposition \ref{Prop_winwin}. Trading helped the carrier to shift the peak load to the off-peak time and gain more profit whereas end-users saved more when they shared their proactive downloads.


\section{Conclusion} \label{sec:conc}
In this work, we seek to harness the unexploited predictability of users' demand pattern and the possibility of end-users being connected with each other in providing an efficient remedy to the wireless spectrum crisis. In particular, A novel marketplace that is based on risk sharing between end-users is proposed where the tension between carrier and end-users is formulated as a Stackelberg game. In order to maximize its profit, the carrier set a time dependent pricing policy and a varying commission so as to influence end-users to trade contents between each other and balance their demand throughout the day. End-users, on the other hand, utilize their predictable demand and the possibility of them being connected with each other in determining optimal \emph{proactive download} decisions in the off-peak time, and selling price for the peak time, so as to minimize their expected payments. Unlike existing reactive time dependent pricing and WiFi off-loading schemes where end-users have to delay their demand and wait for cheap network prices, this proactive based marketplace offers content before demand and avoid the change of the users' behavior.

We explored the equilibrium points for the case when the D2D is available and when it is not available, and study the impact of the uncertainty of users’ future demands on the system performance. We compared the new equilibria with the baseline scenario of flat pricing model where no caching is allowed. For end-users connectivity, we showed that as the connectivity between end-users increases, all parties achieve more gain. For contents interest, we showed that as the interest of end-users increase, all parties achieve more gain. Moreover, we have shown that having one user at least with a non-zero interest will lead to a win-win situation regardless of the users being connected or not.

\bibliographystyle{IEEEtran}
\bibliography{refs}

\begin{appendices}

\section{Proof of Theorem \ref{Thrm_Disconnected}} \label{NM-notrading}
\begin{proof}
We have $N$ disconnected end-users (i.e. $w_p^{(i,j)}=0 \hspace{1mm} \forall i, \forall j$) consuming data from $M$ data sources. From Proposition \ref{prop:mp}, $y_p = \hat{y}$ and $y_o \leq \hat{y}$. Moreover, let  $\Pi=\{\pi_k,k\in\{0,\cdots,MN\}\Big\vert\pi_0>\pi_1>\pi_2>\cdots>\pi_{MN}\}$ as defined in Section \ref{sec:prob_form}. The carrier, then, can simply set $y_o = \hat{y} \pi_k$ for $k\in\{0,\cdots,MN\}$ and choose the one that gives optimum gain. We now investigate the win-win region $\mathcal{R}_{k}^{d}$, where $d$ denotes disconnected, for each value of $y_o$ where $\mathcal{R}_{k}^{d}$ is defined as:
\begin{equation}
\mathcal{R}_{k}^{d} = \biggl\{ p_m^{(j)}, m \in \{1,\cdots,M\}, j \in \{1,\cdots,N\} \Big\vert y_p = \hat{y}, y_o = \hat{y} \pi_k, \Delta{\mu^{(j)}}> 0 \hspace{1mm} , \Delta{R}> 0 \biggr\}
\end{equation}

\subsection{\textbf{\underline{For $k=0$:}}}
For $k=0$, we have $y_p = \hat{y}, y_o = \hat{y} \pi_0 = \hat{y}$ hence $y_p=y_o$. The saving in end-user payment is:
\begin{equation}\label{Eq_YoYp_Users}
\begin{split}
\Delta \mu^{(j)} &= \tmu^{(j)} - \mu^{*(j)} = (y_p - y_o) \hat{L_o}^{(j)} = 0
\end{split}
\end{equation}
and the carrier profit gain is:
\begin{equation}\label{Eq_YoYp_Carrier}
\begin{split}
\Delta R &= R^* - \tR = (y_o - y_p) \sum_{j=1}^N \hat{L_o}^{(j)} = 0\\
\end{split}
\end{equation}
Thus, the win-win region for this case as follows:
\begin{equation}
\mathcal{R}_{0}^{d} = \biggl\{ p_m^{(j)}, m \in \{1,2,\cdots,M\}, j \in \{1,2,\cdots,N\} \Big\vert y_o = \hat{y}, \Delta{\mu^{(j)}}> 0 \hspace{1mm} \forall j , \Delta{R}> 0\biggr\} = \phi\\
\end{equation}

\subsection{\textbf{\underline{For $k=MN$:}}}
For $k=MN$, we have $y_p = \hat{y}, y_o = \hat{y} \pi_{MN}$. The saving in end-user payment is:
\begin{equation}
\begin{split}
\Delta \mu^{(j)} 
                 &= y_p(1 - \pi_{MN}) \hat{L_o}^{(j)} + y_p \sum_{m=1}^M S_m (\pi_m^{(j)}-\pi_{MN}) > 0
\end{split}
\end{equation}
and the carrier profit gain can be found by:
\begin{equation}
\begin{split}
\Delta R 
         &= (y_p \pi_{MN} - y_p - \beta) \sum_{j=1}^N \hat{L_o}^{(j)} + y_p \sum_{j=1}^N \sum_{m=1}^M S_m (\pi_{MN} - \pi_m^{(j)}) - \beta \sum_{j=1}^N \sum_{m=1}^M S_m (1-p_m^{(j)}) < 0
\end{split}
\end{equation}
Thus, the win-win region for this case is:
\begin{equation}
\mathcal{R}_{MN}^{d} = \biggl\{ p_m^{(j)}, m \in \{1,2,\cdots,M\}, j \in \{1,2,\cdots,N\} \Big\vert y_o = \hat{y} \pi_{MN}, \Delta{\mu^{(j)}}> 0 \hspace{1mm} \forall j , \Delta{R}> 0\biggr\} = \phi\\
\end{equation}

\subsection{\textbf{\underline{For $0<k<MN$:}}}
For $0<k<MN$, we have $y_p = \hat{y}, y_o = \hat{y} \pi_k$. From (\ref{eq:x^*}) we can see that if $k=1$, only one end-user caches data content $m=1$ and other end-users will not. If we assume that this end-user is end-user $1$, then we have his saving as::
\begin{equation}
\begin{split}
\Delta \mu^{(1)} &= (y_p - y_o) \hat{L_o}^{(1)} + S_1 (y_p \pi_1-y_o) = y_p (1-\pi_1) \hat{L_o}^{(1)} > 0\\
\end{split}
\end{equation}
while $\Delta \mu^{(j)} = y_p (1-\pi_1) \hat{L_o}^{(j)} > 0 \hspace{1mm} \forall j \in \{2,\cdots,N\}$. The carrier profit gain has can fall in two cases depending on the load:\\
If $\sum_{j=1}^N \sum_{m=1}^{M} S_m p_m^{(j)} < \sum_{j=1}^N \hat{L_o}^{(j)} + S_1 (1+\pi_1)$ Then,
\begin{equation*}
\begin{split}
\Delta R 
         &=  y_p(1 - \pi_1) \sum_{j=1}^N \hat{L_o}^{(j)} - \beta \left(\sum_{j=1}^N \hat{L_o}^{(j)} + S_1 \right) + \beta \sum_{j=1}^N \sum_{m=1}^M S_m p_m^{(j)}
\end{split}
\end{equation*}
Hence, a win-win situation happens when:
\begin{equation}\label{Eq_R1_Cond1_NUsers}
\begin{split}
\beta \sum_{j=1}^N \sum_{m=1}^M S_m p_m^{(j)} > \beta \left(\sum_{j=1}^N \hat{L_o}^{(j)} + S_1 \right) + y_p(1 - \pi_1) \sum_{j=1}^N \hat{L_o}^{(j)}
\end{split}
\end{equation}
If $\sum_{j=1}^N \sum_{m=1}^{M} S_m p_m^{(j)} > \sum_{j=1}^N \hat{L_o}^{(j)} + S_1 (1+\pi_1)$ Then,
\begin{equation*}
\begin{split}
\Delta R &= (y_o - y_p) \hat{L_o} + (y_o-y_p \pi_1-\beta \pi_1) S_1 = y_p (\pi_1 - 1) \hat{L_o} -\beta \pi_1 S_1 \\
\end{split}
\end{equation*}
Hence, a win-win situation happens when
\begin{equation}\label{Eq_R1_Cond2_NUsers}
\begin{split}
\beta S_1 \pi_1 > y_p (1 - \pi_1) \sum_{j=1}^N \hat{L_o}^{(j)}
\end{split}
\end{equation}

Notice that if (\ref{Eq_R1_Cond1_NUsers}) is satisfied then we can guarantee that (\ref{Eq_R1_Cond2_NUsers}) is satisfied. Hence, it is sufficient to consider (\ref{Eq_R1_Cond1_NUsers}) for defining the win-win region in this case. i.e.
\begin{equation}
\begin{split}
\mathcal{R}_{1}^{d} &= \Bigg\{ p_m^{(j)}, m \in \{1,2,\cdots,M\}, j \in \{1,2,\cdots,N\} \Big\vert y_o = \hat{y} \pi_1, \\
& \beta \sum_{j=1}^N \sum_{m=1}^M S_m p_m^{(j)} > \beta \left(S_1 +\sum_{j=1}^N \hat{L_o}^{(j)}\right) + \hat{y} (1 - \pi_1) \sum_{j=1}^N \hat{L_o}^{(j)} \Bigg\}\\
\end{split}
\end{equation}

Similarly the carrier can set $y_o=\hat{y} \pi_k$ for any $k \in \{1,2,\cdots,MN-1\}$. This leads to a corresponding $M^{*(j)}$ for each end-user such that $\sum_{j=1}^{N} M^{*(j)} = k$. Hence, we can define the win-win area as follows:
\begin{equation}
\begin{split}
\mathcal{R}_{k}^{d} &= \Bigg\{ p_m^{(j)}, m \in \{1,2,\cdots,M\}, j \in \{1,2,\cdots,N\} \Big\vert y_o = \hat{y} \pi_k,
                    \beta \sum_{j=1}^N \sum_{m=1}^M S_m p_m^{(j)} > \\
                    &\beta \left(\sum_{j=1}^N \sum_{m=1}^{M^{*(j)}} S_m+ \sum_{j=1}^N \hat{L_o}^{(j)} \right)+ \hat{y} (1 - \pi_k) \sum_{j=1}^N \hat{L_o}^{(j)} + \hat{y} \sum_{j=1}^N \sum_{m=1}^{M^{*(j)}} ( \pi_m^{(j)} - \pi_k) S_m \Bigg\}
\end{split}
\end{equation}
Finally, one can define the overall win-win region $\mathcal{R}_{ww}$ to be the region were any win-win situation can happen as follows:
\begin{equation}
\mathcal{R}_{ww}^{d} = \biggl\{ p_m^{(j)}, m \in \{1,2,\cdots,M\}, j \in \{1,2,\cdots,N\} \Big\vert \Delta{\mu^{(j)}}> 0 \hspace{1mm} \forall j, \Delta{R}> 0 \biggr\}
\end{equation}
and the union of all the win-win regions we have found for each carrier strategy has to be a subset of the overall win-win region:
\begin{equation}\label{Eq_WinWin_1_NUsers}
\bigcup_{i=0}^{MN}{\mathcal{R}_{i}^{d}} \subseteq \mathcal{R}_{ww}^{d}
\end{equation}

Now, define $\overline{\mathcal{R}_{ww}^{d}}$ to be the region were no win-win situation can happen and is defined as:
\begin{equation}
\overline{\mathcal{R}_{ww}^{d}} = \biggl\{ p_m^{(j)}, m \in \{1,2,\cdots,M\}, j \in \{1,2,\cdots,N\} \Big\vert \Delta{\mu^{(j)}}\leq 0 \hspace{1mm} \forall j, \Delta{R}\leq 0\biggr\}
\end{equation}

Also $ \forall k \in \{0,MN\}$,
\begin{equation}
\begin{split}
\overline{\mathcal{R}_{k}^{d}} &= \biggl\{ p_m^{(j)}, m \in \{1,2,\cdots,M\}, j \in \{1,2,\cdots,N\} \Big\vert y_o = y_p \pi_{k}, \Delta{\mu^{(j)}}\leq 0 \hspace{1mm} \forall j, \Delta{R}\leq 0\biggr\}\\
\end{split}
\end{equation}

Now, the intersection of all regions of no win-win for each carrier strategy has to be a subset of the over all region of no win-win:
\begin{equation}\label{Eq_WinWin_2_NUsers}
\bigcap_{i=0}^{MN}{\overline{\mathcal{R}_{i}^{d}}} \subseteq  \overline{\mathcal{R}_{ww}^{d}}
\Longleftrightarrow \overline{\biggl(\bigcup_{i=0}^{MN}{\mathcal{R}_{i}^{d}}\biggr)} \subseteq  \overline{\biggl(\mathcal{R}_{ww}^{d}\biggr)}
\Longleftrightarrow \mathcal{R}_{ww}^{d} \subseteq  \bigcup_{i=0}^{MN}{\mathcal{R}_{i}^{d}}
\end{equation}

From (\ref{Eq_WinWin_1_NUsers}) and (\ref{Eq_WinWin_2_NUsers}) we conclude that:
\begin{equation}
\mathcal{R}_{ww}^{d}  = \bigcup_{i=0}^{MN}{\mathcal{R}_{i}^{d}}
\end{equation}

Moreover, we notice that $\mathcal{R}_{1}^{d} \supseteq \mathcal{R}_{2}^{d} \supseteq \mathcal{R}_{3}^{d} \supseteq \cdots. \supseteq \mathcal{R}_{MN-1}^{d}$ , then $\bigcup_{i=0}^{MN}{\mathcal{R}_{i}^{d}} = \mathcal{R}_{1}^{d}$, then
\begin{equation}
\begin{split}
\mathcal{R}_{ww}^{d} &= \Bigg\{ p_m^{(j)}, m \in \{1,2,\cdots,M\}, j \in \{1,2,\cdots,N\} \Big\vert \\
                      & \beta \sum_{j=1}^N \sum_{m=1}^M S_m p_m^{(j)} > \beta \left(S_1 +\sum_{j=1}^N \hat{L_o}^{(j)}\right) + \hat{y} (1 - \pi_1^{(1)}) \sum_{j=1}^N \hat{L_o}^{(j)} \Bigg\}\\
\end{split}
\end{equation}
\end{proof}

\section{Proof of Theorem \ref{Thrm_Connected}}
\begin{proof}
We have $N$ connected end-users consuming data from $M$ data contents. The carrier sets the off-peak price such that only the highest interested user will buy in the peak time as shown in Claim \ref{claim_1}. Suppose that data contents are sorted descendingly according to the interest of the highest interested end-user. Let $\pi_k = \max(\pi_k^{(1)},\pi_k^{(2)},\cdots,\pi_k^{(N)})$ for $k \in \{1,2,\cdots,M\}$. Hence, we get a new set of freshness-popularity products $\Pi=\{\pi_k,k\in\{0,1,2,\cdots,M\}\Big\vert\pi_0>\pi_1>\pi_2>\cdots>\pi_{M}\}$. Therefore, the carrier can simply set $y_0 = \hat{y} \pi_k$ for $k \in \{1,2,\cdots,M\}$. Moreover, the carrier sets $\gamma \leq 1$. We now investigate the win-win region $\mathcal{R}_{k}^{c}$, where $c$ denotes connected, for each value of $y_o$ where $\mathcal{R}_{k}^{d}$ is defined as:
\begin{equation}
\mathcal{R}_{k}^{c} = \biggl\{ p_m^{(j)}, m \in \{1,\cdots,M\}, j \in \{1,\cdots,N\} \Big\vert y_o = \hat{y} \pi_k, \Delta{\mu^{(j)}}> 0 \hspace{1mm} , \Delta{R}> 0 \biggr\}
\end{equation}

\subsection{\textbf{\underline{For $0\leq k \leq M$ and $\gamma=1$:} }}
The carrier is applying smart pricing but when $\gamma=1$ we can see from  (\ref{Eq_OptJStrategy_1}) and (\ref{Eq_xi_opt_1}) that each end-user will cache the data contents if his interest $\pi_{m}^{(j)}>\frac{y_o}{y_p}$ and no one will wait for the highest interested end-user regardless of his selling price. Hence, the win-win region here will be similar to that of the disconnected end-users under the same choice of $y_o$ and $y_p$.
\begin{equation}
\begin{split}
\mathcal{R}^c_{k} =  \mathcal{{R}}_{k}^{d} \ \ \ \ \ 0\leq k \leq M
\end{split}
\end{equation}

\subsection{\textbf{\underline{For $k=0$ and $ \gamma < 1$:}}}
We can see from (\ref{Eq_OptJStrategy_1}) and (\ref{Eq_xi_opt_1}) that no end-user is going to download any data content proactively which means that no trading is going to happen between them. Hence, we have the same scenario as in (\ref{Eq_YoYp_Users}) and (\ref{Eq_YoYp_Carrier}). Thus, $\Delta \mu^{(j)} = 0$ for $j \in \{1,2,\cdots,N\}$ and $\Delta R = 0$. Hence, the win-win region will be the same as the disconnected case. i.e. we have:
\begin{equation}
\begin{split}
\mathcal{R}^c_{0} &= \biggl\{ p_m^{(j)}, m \in \{1,\cdots,M\}, j \in \{1,\cdots,N\} \Big\vert y_o = y_p, \Delta{\mu^{(j)}}> 0 , \Delta{R}> 0\biggr\} = \phi = \mathcal{{R}}_{0}^{d}
\end{split}
\end{equation}

\subsection{\textbf{\underline{For $k>0$ and $\gamma<1$:} }}
In this case, for each source the end-user with the highest interest caches it in the off-peak time while other end-users wait and buy the fresh portion from him in the peak time. Let us assume, without loss of generality, that end-user 1 has the highest interest in data contents $m\in \{1,2,\cdots,k_1\}$ and end-user 2 has the highest interest in data contents $m\in \{k_1+1,k_1+2,\cdots,k_2\}$ and so on. Generally, end-user $j$ has the highest interest in data contents $m\in \{k_{j-1}+1,k_{j-1}+2,\cdots,k_{j}\}$ for $j \in {1,2,\cdots,N}$ where $k_{0}=0$, such that $\sum_{j=1}^{N}(k_j-k_{j-1}) = M$. Hence, the freshness-popularity product set will be,
\begin{equation}
\begin{split}
\Pi &= \left\{\pi_0,\pi_1,\cdots,\pi_{k_1},\pi_{k_1+1},\cdots,\pi_{k_2},\cdots,\pi_{k_{N-1}+1},\cdots,\pi_{k_N}\right\}\\
    &=\left\{\pi_0^{(1)},\pi_1^{(1)},\cdots,\pi_{k_1}^{(1)},\pi_{k_1+1}^{(2)},\cdots,\pi_{k_2}^{(2)},\cdots,\pi_{k_{N-1}+1}^{(N)},\cdots,\pi_{k_N}^{(N)}\right\}
\end{split}
\end{equation}

If the carrier sets $y_o = \hat{y}\pi_{1}^{(1)}$, $y_p = \hat{y}$ and $\gamma<1$, then end-user 1 who is the most interested end-user in this data content reacts by caching it proactively while other end-users wait and buy it from him. Now, lets investigate the possibility of a win-win situation for this case by checking $\Delta \mu^{(j)}$ and $\Delta R$. For user 1 we have:\\
\begin{equation}
\begin{split}
\Delta \mu^{(1)} 
&= y_p (1 - \pi_1^{(1)}) \hat{L_o}^{(1)}  + y_s^{(1)} S_1 (1-\gamma)\sum_{j=2}^{N} \pi_1^{(j)}  >0\\
\end{split}
\end{equation}
and $\Delta \mu^{(j)}>0$ for all $j \in \{2,\cdots,N\}$
\begin{equation}
\begin{split}
\Delta \mu^{(j)} 
&= y_p(1 - \pi_1^{(1)}) \hat{L_o}^{(j)}  + (y_p - y_s^{(1)}) S_1  \pi_1^{(j)} > 0\\
\end{split}
\end{equation}
The carrier profit gain is:\\
\begin{equation}
\begin{split}
\Delta R 
&= y_p (\pi_1^{(1)} -1)\sum_{\substack{j=1 }}^{N} \hat{L_o}^{(j)} + y_p S_1 \left( \frac{\pi_{1}^{(1)}}{\max\limits_{\substack{j\in \{2,\cdots,N\}}}\pi_{1}^{(j)}}-1 \right) \sum_{j=2}^{N} \pi_1^{(j)} \\
&- \beta \left( \sum_{\substack{j=1 }}^{N} \hat{L_o}^{(j)} + S_1 - \sum_{j=1}^{N}\sum_{m=1}^{M} S_m p_m^{(j)} \right)\\
\end{split}
\end{equation}

Hence, a win-win situation is:
\begin{equation}
\begin{split}
\mathcal{R}_{1}^{c} &= \Biggl\{ p_m^{(j)}, m \in \{1,2,\cdots,M\}, j \in \{1,2,\cdots,N\} \Big\vert y_o = \hat{y} \pi_1^{(1)}, \beta \sum_{j=1}^{N} \sum_{m=1}^{M} S_m p_m^{(j)} > \\
& \beta \left(S_1 + \sum_{\substack{j=1 }}^{N} \hat{L_o}^{(j)} \right) + \hat{y}(1 - \pi_{1}^{(1)})\sum_{\substack{j=1 }}^{N} \hat{L_o}^{(j)} + \hat{y} S_1 \left( 1-\frac{\pi_{1}^{(1)}}{\max\limits_{\substack{j\in \{2,\cdots,N\}}}\pi_{1}^{(j)}} \right) \sum_{j=2}^{N} \pi_1^{(j)} \Biggr\}\\
\end{split}
\end{equation}


If the carrier sets $y_o = \hat{y} \pi_{k_1}^{(1)}$, $y_p = \hat{y}$ and $\gamma<1$, then end-user 1 caches data contents $m \in \{1,2,\cdots,k_1\}$ proactively while other end-users wait and buy it from him. Hence, the win-win region is:
\begin{equation}
\begin{split}
\mathcal{R}_{k_1}^{c} &= \Bigg\{ p_m^{(j)}, m \in \{1,2,\cdots,M\}, j \in \{1,2,\cdots,N\} \Big\vert y_o = \hat{y} \pi_{k_1}^{(1)}, \\
                    & \beta \sum_{j=1}^{N} \sum_{m=1}^{M} S_m p_m^{(j)} > \beta \left(\sum_{m=1}^{k_1}S_m + \sum_{\substack{j=1 }}^{N} \hat{L_o}^{(j)} \right) + \hat{y} (1 - \pi_{k_1}^{(1)})\sum_{\substack{j=1 }}^{N} \hat{L_o}^{(j)} \\
& + \hat{y} \sum_{m=1}^{k_1} (\pi_m^{(1)} - \pi_{k_1}^{(1)}) S_m + \hat{y} \left(1-\frac{\pi_{k_1}^{(1)}}{\max\limits_{\substack{j\in \{2,\cdots,N\} \\ m\in \{1,\cdots,k_1\}}}\pi_{m}^{(j)}} \right) \sum_{\substack{j=2 }}^{N} \sum_{m=1}^{k_1}S_m  \pi_m^{(j)} \Bigg\}
\end{split}
\end{equation}

If the carrier sets $y_o = \hat{y} \pi_{k_2}^{(2)}$ and $y_p = \hat{y}$, then end-user 1 reacts by caching data contents $m \in \{1,2,\cdots,k_1\}$ and sells it to all other end-users. Moreover, end-user 2 reacts by caching data contents $m \in \{k_1+1,k_1+2,\cdots,k_2\}$ and sells it to all other end-users. Hence for end-user 1 we have:
\begin{equation}
\begin{split}
\Delta \mu^{(1)} 
&= y_p (1 - \pi_{k_2}^{(2)}) \hat{L_o}^{(1)} + y_p \sum_{m=1}^{k_1} S_m (\pi_m^{(1)}-\pi_{k_2}^{(2)}) + (y_p - y_s^{(2)}) \sum_{m=k_1+1}^{k_2}S_m \pi_m^{(1)}\\
&+ y_s^{(1)}(1-\gamma)\sum_{j=2}^{N}\sum_{m=1}^{k_1} S_m \pi_m^{(j)} > 0\\
\end{split}
\end{equation}

For end-user 2 we have:
\begin{equation}
\begin{split}
\Delta \mu^{(2)} 
&= y_p (1 - \pi_{k_2}^{(2)})\hat{L_o}^{(2)} + (y_p - y_s^{(1)})\sum_{m=1}^{k_1}S_m \pi_m^{(2)} +  y_p\sum_{m=k_1+1}^{k_2}(\pi_m^{(2)}-\pi_{k_2}^{(2)}) S_m\\
&+ y_s^{(2)}(1-\gamma)\sum_{\substack{j=1 \\ j\neq2}}^{N} \sum_{m=k_1+1}^{k_2}S_m \pi_m^{(j)} > 0
\end{split}
\end{equation}

Now, since all other end-users $j\in\{3,4,\cdots,N\}$ are symmetric, it is sufficient to check the saving of one of them and show that $\Delta \mu^{(j)}>0$ for all $j \in \{3,4,\cdots,N\}$
\begin{equation}
\begin{split}
\Delta \mu^{(j)} 
&= y_p(1 - \pi_{k_2}^{(2)}) \hat{L_o}^{(j)} + (y_p - y_s^{(1)}) \sum_{m=1}^{k_1} S_m \pi_m^{(j)}+ (y_p- y_s^{(2)}) \sum_{m=k_1+1}^{k_2} S_m \pi_m^{(j)} >0 \\
\end{split}
\end{equation}

The carrier profit gain is:\\

\begin{equation}
\begin{split}
\Delta R &= y_p\left(\pi_{k_2}^{(2)} - 1 \right)\sum_{\substack{j=1 }}^{N} \hat{L_o}^{(j)} + y_p \pi_{k_2}^{(2)} \sum_{m=1}^{k_2}S_m - y_p \sum_{j=1}^{N}\sum_{m=1}^{k_2} S_m \pi_m^{(j)} + y_p\left(\frac{\pi_{k_1}^{(1)}}{\max\limits_{\substack{j\in \{2,\cdots,N\} \\ m\in \{1,\cdots,k_1\}}}\pi_{m}^{(j)}}\right) \sum_{j=2}^{N}\sum_{m=1}^{k_1} S_m \pi_m^{(j)} \\
&+  y_p \left(\frac{\pi_{k_2}^{(2)}}{\max\limits_{\substack{j\in \{1,3,\cdots,N\} \\ m\in \{k_1+1,\cdots,k_1+k_2\}}}\pi_{m}^{(j)}}\right) \sum_{\substack{j=1 \\ j\neq2}}^{N} \sum_{m=k_1+1}^{k_2} S_m \pi_m^{(j)} + \beta \sum_{j=1}^{N}\sum_{m=1}^{M} S_m p_m^{(j)} - \beta \left(\sum_{\substack{j=1 }}^{N} \hat{L_o}^{(j)} + \sum_{m=1}^{k_2}S_m \right)\\
\end{split}
\end{equation}

Hence, a win-win situation is:
\begin{equation}\label{Eq_RK2_NUsers_trading}
\begin{split}
\mathcal{R}^c_{k_2} &= \Biggl\{ p_m^{(j)}, m \in \{1,2,\cdots,M\}, j \in \{1,2,\cdots,N\} \Big\vert \\
& \beta \sum_{j=1}^{N} \sum_{m=1}^{M} S_m p_m^{(j)} > \beta \left(\sum_{\substack{j=1 }}^{N} \hat{L_o}^{(j)} + \sum_{m=1}^{k_2}S_m \right) + \hat{y}(1 - \pi_{k_2}^{(2)})\sum_{\substack{j=1 }}^{N} \hat{L_o}^{(j)} \\
& + \hat{y} \sum_{m=1}^{k_1} (\pi_m^{(1)} - \pi_{k_2}^{(2)}) S_m + \hat{y} \sum_{m=k_1+1}^{k_2} (\pi_m^{(2)} - \pi_{k_2}^{(2)}) S_m\\
& + \hat{y}\left(1-\frac{\pi_{k_1}^{(1)}}{\max\limits_{\substack{j\in \{2,\cdots,N\} \\ m\in \{1,\cdots,k_1\}}}\pi_{m}^{(j)}}\right) \sum_{j=2}^{N}\sum_{m=1}^{k_1} S_m \pi_m^{(j)} + \hat{y} \left(1-\frac{\pi_{k_2}^{(2)}}{\max\limits_{\substack{j\in \{1,3,\cdots,N\} \\ m\in \{k_1+1,\cdots,k_1+k_2\}}}\pi_{m}^{(j)}}\right) \sum_{\substack{j=1 \\ j\neq2}}^{N} \sum_{m=k_1+1}^{k_2} S_m \pi_m^{(j)}\Biggr\}\\
\end{split}
\end{equation}

The win-win region defined in (\ref{Eq_RK2_NUsers_trading}) can be expanded to the case when the carrier sets $y_o = \hat{y} \pi_{M}^{(N)}$, $y_p = \hat{y}$ and $\gamma <1$. Each end-user $j\in\{1,\cdots,N\}$ reacts by downloading data contents $m \in \{k_{j-1}+1,k_{j-1}+2,\cdots,k_{j}\}$ and sells it to all other users. Hence, the win-win region will be:
\begin{equation}\label{Eq_RM_NUsers_trading}
\begin{split}
\mathcal{R}^c_{M} &= \Biggl\{ p_m^{(j)}, m \in \{1,2,\cdots,M\}, j \in \{1,2,\cdots,N\} \Big\vert \\
& \beta \sum_{j=1}^{N} \sum_{m=1}^{M} S_m p_m^{(j)} > \beta \left(\sum_{\substack{j=1 }}^{N} \hat{L_o}^{(j)} + \sum_{m=1}^{k_2}S_m \right) + \hat{y}(1 - \pi_{k_2}^{(2)})\sum_{\substack{j=1 }}^{N} \hat{L_o}^{(j)} \\
& + \hat{y} \sum_{j=1}^{N} \sum_{m=k_{j-1}+1}^{k_j} (\pi_m^{(j)} - \pi_{M}^{(N)}) S_m + \hat{y} \sum_{j=1}^{N} \left(1-\frac{\pi_{k_j}^{(j)}}{\max\limits_{\substack{i\in \{1,\cdots,N\}, i \neq j \\ m\in \{k_{j-1}+1,\cdots,k_j\}}}\pi_{m}^{(i)}}\right) \sum_{\substack{i=1 \\ i\neq j}}^{N}\sum_{m=1}^{k_1} S_m \pi_m^{(i)} \Biggr\}\\
\end{split}
\end{equation}

Now, let us define $\mathcal{R}_{ww}^{c}$ to be the region were a win-win situation can happen and is defined as:
\begin{equation}
\mathcal{R}_{ww}^{c} = \left\{p_m^{(j)}, m \in \{1,2,\cdots,M\}, j \in \{1,2,\cdots,N\} \Big\vert \Delta{\mu}^{(j)}> 0, \Delta{R}> 0\right\}
\end{equation}

Hence, the union of all the win-win regions we have found for each carrier strategy has to be a subset of the overall win-win region:
\begin{equation}\label{Eq_winwin_Connected_1}
\bigcup_{m=1}^{M}{\mathcal{R}_{m}^{c}} \subseteq \mathcal{R}_{ww}^{c}
\end{equation}

Now, define $\overline{\mathcal{R}_{ww}^{c}}$ to be the region were no win-win situation can happen and is defined as:
\begin{equation}
\overline{\mathcal{R}_{ww}^{c}} = \left\{p_m^{(j)}, m \in \{1,2,\cdots,M\}, j \in \{1,2,\cdots,N\} \Big\vert \Delta{\mu}^{(j)} \leq 0 , \Delta{R}\leq 0\right\}
\end{equation}

Also,
\begin{equation}
\begin{split}
\overline{\mathcal{R}_{m}^{c}} &= \left\{p_m^{(j)}, m \in \{1,2,\cdots,M\}, j \in \{1,2,\cdots,N\} \Big\vert y_o = y_p \pi_{m}^{(j)}, \Delta{\mu}^{(j)} \leq 0 , \Delta{R}\leq 0\right\}\\
\end{split}
\end{equation}

Now, the intersection of all regions of no win-win for each carrier strategy has to be a subset of the over all region of no win-win: 
\begin{equation}\label{Eq_winwin_Connected_2}
\bigcap_{m=1}^{M}\overline{\mathcal{R}_{m}^{c}} \subseteq  \overline{\mathcal{R}_{ww}^{c}} \nonumber
\Leftrightarrow \biggl(\overline{\bigcup_{m=1}^{M}{\mathcal{R}^c_{m}}\biggr)}\subseteq   \overline{\biggl(\mathcal{R}^c_{ww}\biggr)} \nonumber
\Leftrightarrow \mathcal{R}^c_{ww}  \subseteq  \bigcup_{m=1}^{M}{\mathcal{R}^c_{m}}
\end{equation}

From (\ref{Eq_winwin_Connected_1}) and (\ref{Eq_winwin_Connected_1}) we conclude that:
\begin{equation}\label{win-win13}
\mathcal{R}^c_{ww}  =  \bigcup_{m=1}^{M}{\mathcal{R}^c_{m}}
\end{equation}
Moreover, (\ref{win-win13}) can be simplified more if we notice that $\mathcal{R}^c_{1} \supseteq \mathcal{R}^c_{2} \supseteq \mathcal{R}^c_{3} \supseteq \cdots. \supseteq \mathcal{R}^c_{M}$ , then $\bigcup_m{\mathcal{R}^c_{m}} = \mathcal{R}^c_{1}$ and the win-win region is given by:
\begin{equation}\label{win-win14}
\begin{split}
\mathcal{R}^c_{win-win}  &=  {\mathcal{R}^c_{1}}\\
 &= \Biggl\{ p_m^{(j)}, m \in \{1,2,\cdots,M\}, j \in \{1,2,\cdots,N\} \Big\vert \beta \sum_{j=1}^{N} \sum_{m=1}^{M} S_m p_m^{(j)} > \\
& \beta \left(S_1 + \sum_{\substack{j=1 }}^{N} \hat{L_o}^{(j)} \right) + \hat{y}(1 - \pi_{1}^{(1)})\sum_{\substack{j=1 }}^{N} \hat{L_o}^{(j)} + \hat{y} S_1 \left( 1-\frac{\pi_{1}^{(1)}}{\max\limits_{\substack{j\in \{2,\cdots,N\}}}\pi_{1}^{(j)}} \right) \sum_{j=2}^{N} \pi_1^{(j)} \Biggr\}\\
\end{split}
\end{equation}
\end{proof}
\end{appendices}

\end{document}